\newcommand{\var}[1]{\text{\lstinline+#1+}}
\newtheorem{theorem}{Theorem}
\newcommand{\Carol}{\ensuremath{\text{Carol}}}
\newcommand{\David}{\ensuremath{\text{David}}}
\newcommand{\seclabel}[1]{\label{sec:#1}}
\newcommand{\nakedsecref}[1]{\ref{sec:#1}}
\newcommand{\secref}[1]{Section~\nakedsecref{#1}}
\newcommand{\addressVar}{\ensuremath{\mathit{address}}}
\newcommand{\approveLeader}{\ensuremath{\mathit{approveLeader}}}
\newcommand{\Asset}{\ensuremath{\mathit{Asset}}}
\newcommand{\candidateReceiver}{\ensuremath{\mathit{candidate\_receiver}}}
\newcommand{\candidateSender}{\ensuremath{\mathit{candidate\_sender}}}
\newcommand{\claim}{\ensuremath{\mathit{claim}}}
\newcommand{\contestLeader}{\ensuremath{\mathit{contestLeader}}}
\newcommand{\contest}{\ensuremath{\mathit{contest}}}
\newcommand{\mutateLockFollower}{\ensuremath{\mathit{mutateLockFollower}}}
\newcommand{\mutateLockFreeFollower}{\ensuremath{\mathit{mutateLockFreeFollower}}}
\newcommand{\mutateLockLeader}{\ensuremath{\mathit{mutateLockLeader}}}
\newcommand{\mutation}{\ensuremath{\mathit{mutation}}}
\newcommand{\receiver}{\ensuremath{\mathit{receiver}}}
\newcommand{\refund}{\ensuremath{\mathit{refund}}}
\newcommand{\replaceFollower}{\ensuremath{\mathit{replaceFollower}}}
\newcommand{\replaceHashlock}{\ensuremath{\mathit{replace\_hashlock}}}
\newcommand{\replaceLeader}{\ensuremath{\mathit{replaceLeader}}}
\newcommand{\replace}{\ensuremath{\mathit{replace}}}
\newcommand{\revertLeader}{\ensuremath{\mathit{revertLeader}}}
\newcommand{\sender}{\ensuremath{\mathit{sender}}}
\newcommand{\startFollower}{\ensuremath{\mathit{startFollower}}}
\newcommand{\startLeader}{\ensuremath{\mathit{startLeader}}}
\newcommand{\startTime}{\ensuremath{\mathit{start\_time}}}
\newcommand{\swapHashlock}{\ensuremath{\mathit{swap\_hashlock}}}
\newcommand{\timeoutBob}{\ensuremath{\mathit{timeout\_Bob}}}
\newcommand{\timeoutCarol}{\ensuremath{\mathit{timeout\_Carol}}}
\newcommand{\timeout}{\ensuremath{\mathit{timeout}}}
\definecolor{verylightgray}{rgb}{.97,.97,.97}
\lstdefinelanguage{Solidity}{
	keywords=[1]{anonymous, assembly, assert, balance, break, call, callcode, case, catch, class, constant, continue, constructor, contract, debugger, default, delegatecall, delete, do, else, emit, event, experimental, export, external, false, finally, for, function, gas, if, implements, import, in, indexed, instanceof, interface, internal, is, length, library, log0, log1, log2, log3, log4, memory, modifier, new, payable, pragma, private, protected, public, pure, push, require, return, returns, revert, selfdestruct, send, solidity, storage, struct, suicide, super, switch, then, this, throw, transfer, true, try, typeof, using, value, view, while, with, addmod, ecrecover, keccak256, mulmod, ripemd160, sha256, sha3}, 
	keywordstyle=[1]\color{blue}\bfseries,
	keywords=[2]{address, bool, byte, bytes, bytes1, bytes2, bytes3, bytes4, bytes5, bytes6, bytes7, bytes8, bytes9, bytes10, bytes11, bytes12, bytes13, bytes14, bytes15, bytes16, bytes17, bytes18, bytes19, bytes20, bytes21, bytes22, bytes23, bytes24, bytes25, bytes26, bytes27, bytes28, bytes29, bytes30, bytes31, bytes32, enum, int, int8, int16, int24, int32, int40, int48, int56, int64, int72, int80, int88, int96, int104, int112, int120, int128, int136, int144, int152, int160, int168, int176, int184, int192, int200, int208, int216, int224, int232, int240, int248, int256, mapping, string, uint, uint8, uint16, uint24, uint32, uint40, uint48, uint56, uint64, uint72, uint80, uint88, uint96, uint104, uint112, uint120, uint128, uint136, uint144, uint152, uint160, uint168, uint176, uint184, uint192, uint200, uint208, uint216, uint224, uint232, uint240, uint248, uint256, var, void, ether, finney, szabo, wei, days, hours, minutes, seconds, weeks, years},	
	keywordstyle=[2]\color{teal}\bfseries,
	keywords=[3]{block, blockhash, coinbase, difficulty, gaslimit, number, timestamp, msg, data, gas, sender, sig, value, now, tx, gasprice, origin},	
	keywordstyle=[3]\color{violet}\bfseries,
	identifierstyle=\color{black},
	sensitive=false,
	comment=[l]{//},
	morecomment=[s]{/*}{*/},
	commentstyle=\color{gray}\ttfamily,
	stringstyle=\color{red}\ttfamily,
	morestring=[b]',
	morestring=[b]"
}
\title{Transferable Cross-Chain Options}
\author{Daniel Engel}
\affiliation{%
   \institution{Computer Science Dept., Brown University}
   \city{Providence}
   \state{RI}
   \country{USA}}
\author{Yingjie Xue}
\affiliation{%
   \institution{Computer Science Dept., Brown University}
   \city{Providence}
   \state{RI}
   \country{USA}}
\begin{abstract}
An \emph{option} is a financial agreement between two parties to trade two assets.
One party is given the right, but not the obligation,
to complete the swap before a specified termination time.
In today's financial markets,
an option is considered an asset which can itself be transferred:
while an option is active,
one party can sell its rights (or obligations) to another.

Today's blockchains support simple options in the form of
cross-chain atomic swap protocols where one party has the
choice whether to complete the swap.
The options implemented by these cross-chain protocols,
are not, however, transferable.

This paper proposes novel distributed protocols for
\emph{transferable cross-chain options},
where both option owners and providers can sell their positions to third parties.
The protocol ensures that none of the parties can be cheated,
that no unauthorized party can interfere,
and that the transfer succeeds if the buyer and seller faithfully follow the protocol.

\end{abstract}
\begin{document}
\maketitle

\section{Introduction}
An \emph{option} is a financial agreement between two parties,
say, Alice and Bob.
Alice owns some units of ``florin'' cryptocurrency,
while Bob owns some units of ``guilder'' cryptocurrency.
Alice and Bob agree that Alice will purchase from Bob
the right, but not the obligation,
to buy 100 guilders from him in return for 100 florins from her,
at any time before, say, next Tuesday.
If the value of guilders relative to florins goes up before Tuesday,
Alice will exercise her option by executing the trade,
and otherwise she will keep her florins and walk away.
Alice pays Bob a fee, called a premium,
to compensate him for his inconvenience and risk.

In the world of \emph{decentralized finance} (Defi),
florins and guilders are managed on distinct blockchains,
Alice and Bob are autonomous agents, 
and they do not trust each other.
Moreover, 
they have no recourse to third-party arbiters or to courts of law.
Several non-trivial distributed protocols have been proposed to execute
cross-chain swaps and options~\cite{Herlihy2018,HerlihyLS2019,XueH2021,arwen2019,eizinger,han2019optionality,liu2018,tefaghcapital,xu2020game}.
Simplifying somewhat,
these protocols guarantee that if Alice and Bob are both honest,
then the deal unfolds as planned,
but if either party cheats,
the honest party ends up in what it considers a satisfactory position.

The contribution of this paper is to take the notion of a cross-chain option
to the next level:
while Alice's unexercised option has value,
she should be able to sell it to a third party, Carol.
In conventional finance, such a transfer is simple:
Bob grants Alice an options contract,
Alice signs over that contract to Carol,
and all agreements are enforced by civil law.
In the lawless world of decentralized finance,
by contrast,
transferable options require a carefully-crafted distributed protocol.

We propose novel cross-chain protocols,
to support transferable cross-chain options.
Additionally, we prove the security properties of our proposed protocols, and detailed pseudocode for the protocols is provided.
Informally, our proposed protocols can address option transfer in the following scenarios:
\begin{enumerate}
    \item Alice (the option owner, i.e the one who has a right to buy an asset) transfers her position to another party (Carol).
    \item Bob (the option provider, i.e. the one who provides the owner the right) transfers his position to another party (David).
    \item Alice and Bob concurrently transfer their positions to Carol and David, respectively.
\end{enumerate}

Take the option position transfer between Alice and Carol for example. The proposed transfer protocol ensures that if Alice and Carol are honest,
then Alice relinquishes her rights,
Carol assumes Alice's rights,
and Bob cannot veto the transfer.
If Carol cheats,
then Alice's rights remain with Alice.

We view the proposed distributed protocols as a first step toward a more ambitious goal.
One can imagine more complex cross-chain deals where
parties acquire various rights and obligations
(options, futures, derivatives, and so on).
It should be possible for a party who holds unrealized rights or obligations
to sell those rights or obligations to another party,
atomically transferring its position
without being hindered by any third party.
Here, 
we show how to make certain transferable cross-chain option deals,
but someday we hope to do the same for arbitrary cross-chain deals.

This paper is organized as follows. Our model is given in \secref{model}. We provide an overview of our proposed transferable two-party swap protocols in \secref{overview} and detailed protocols are described in \secref{two_party_swap_transfer}. We prove security properties of our proposed protocols in \secref{sec:properties}.  In \secref{related}, we describe related work. Finally, we conclude in  \secref{conclusions}.

\section{Model}
\label{sec:model}
For our purposes,
a \emph{blockchain} is a tamper-proof distributed ledger (or database) that
tracks ownership of \emph{assets} by \emph{parties}.
An asset can be a cryptocurrency, a token, an electronic deed to property, and so on.
A party can be a person, an organization, or even a contract (see below).
There are multiple blockchains managing different kinds of assets.
We focus here on applications where mutually-untrusting parties trade assets
across multiple blockchains.
Because we treat blockchains simply as ledgers,
our results do not depend on specific blockchain technology
(such as proof-of-stake vs proof-of-work).
We assume only that ledgers are highly available, tamper-proof,
and capable of running smart contracts.

A \emph{smart contract} (or ``contract'') is a
blockchain-resident program initialized and called by the parties.
A party can create a contract on a blockchain,
or call a function exported by an existing contract.
Contract code and state are public,
so a party calling a contract function knows what code will be executed.
Contract code is deterministic because contract execution is replicated,
and all executions must agree.

A contract can read or write ledger entries on the blockchain where it resides,
but it cannot actively access data from the outside world,
including calling contracts on other blockchains.
Although there are protocols that allow blockchains to communicate (cross-chain proofs), they have weaknesses that make them difficult to use in practice: for example, incompatibility problems, lack of decentralization, and non-deterministic guarantees of message delivery ~\cite{robinson2021survey}. Therefore, we assume different blockchains do not communicate.

A contract on blockchain $A$ can learn of a change to
a blockchain $B$ only if some party explicitly informs $A$ of $B$'s change, 
along with some kind of ``proof'' that the information about $B$'s state is correct.
Contract code is passive, public, deterministic, and trusted,
while parties are active, autonomous, and potentially dishonest.

At any given time each party has a given financial \textit{position}.
This is characterized by their current and future financial obligations/allowances.
Financial positions are managed by rules on contracts that can reside across multiple distinct blockchains.

Our execution model is \emph{synchronous}:
there is a known upper bound $\Delta$ on the propagation time
for one party's change to the blockchain state,
plus the time to be noticed by the other parties.
Specifically, blockchains generate new blocks at a steady rate,
and valid transactions sent to the blockchain will be included in a block
and visible to participants within a known, bounded time $\Delta$.

As noted,
we assume blockchains are always available,
tamper-proof,
and that they correctly execute their contracts.
Although parties may display Byzantine behavior,
contracts can limit their behavior by rejecting unexpected contract calls.

We make standard cryptographic assumptions.
Each party has a public key and a private key,
and any party's public key is known to all.
Messages are signed so they cannot be forged,
and they include single-use labels (``nonces'')
so they cannot be replayed.

\section{Problem Overview}
\label{sec:overview}
One way to convey the challenge presented by making options transferable
is to trace the evolutionary path that leads from 20th Century distributed
systems to modern decentralized finance (DeFi).
Suppose, in 1999, Alice and Bob agree that she will transfer 100 dollars to him
at a New York bank,
and he will transfer 100 euros to her at a Paris bank.
Each bank maintains its own database, which communicate over a network.
The swap protocol must tolerate hardware failures:
databases can crash and messages can be lost.
Each bank makes a tentative transfer.
In the classical \emph{two-phase} commit protocol~\cite{BernsteinHZ1986},
each bank records its tentative transfers ``somewhere safe''
(on a magnetic disk that survives crashes)
and sends to a trusted coordinator a vote whether to commit or abort.
If both banks vote to commit, the transfers are installed,
and if either votes to abort, or does not vote in a reasonable time,
the transfers are discarded, leaving both databases unchanged.

The two-phase commit protocol established a pattern for later blockchain protocols.
Today, Alice and Bob want to trade units of cryptocurrency.
They agree to exchange 100 of her (electronic) florins for 100
of his (electronic) guilders.
Each cryptocurrency is managed on a distinct blockchain.
Alice and Bob must agree on a swap protocol that
tolerates not only hardware failures,
but Byzantine failures by participants:
each party must protect itself if its counterparty cheats
by departing from the agreed-upon protocol.

Atomic swap protocols based on \emph{hashed timelock contracts} (HTLCs)~\cite{tiernolan,Herlihy2018} mimic two-phase commit.
In the HTLC protocol's first phase,
each party places the assets to be transferred ``somewhere safe''.
In place of writing to magnetic disk,
Alice transfers her florins to an \emph{escrow} account
controlled by a \emph{smart contract},
a program that decides, based on later input,
whether to \emph{commit} by transferring Alice's florins to Bob,
or to \emph{abort} by refunding Alice her florins.
Then Bob does the same.
Each escrow transfer is effectively a vote to commit.
When both votes are confirmed,
Alice releases a secret that causes the contracts to complete the swap.
Otherwise, if the escrow transfers are not confirmed in reasonable time,
the contracts refund the assets to their original owners.
This description elides many technical details,
but the HTLC protocol's overall structure is remarkably similar to
two-phase commit, despite substantial differences in their failure models.

While technically correct, HTLC swap protocols are flawed:
once both assets are escrowed,
Alice can take her time deciding whether to trigger the swap.
Cryptocurrencies are notoriously volatile,
so if the value of Bob's guilders goes up relative to Alice's florins
before the timeout expires,
she can choose to complete the swap at the last minute.
If the value goes down, she is free to walk away without penalty.
Bob may be reluctant to accept such a deal.

This protocol is unfair to Bob because only Alice has \emph{optionality}:
at the end, he cannot back out of the deal, but she can.
If she does back out, he gets his guilders back,
but only after a possibly long delay while the market is moving against him.
Bob thus incurs the opportunity cost of not being able to use his coins while they are escrowed.
By contrast, in conventional finance,
this deal would be structured as an \emph{option contract},
where Alice pays Bob a fee, called a \emph{premium},
to compensate him if she walks away without completing the deal.

Cross-chain atomic swap options require carefully-defined
distributed protocols~\cite{liu2018,XueH2021}.
The protocol proposed by Xue and Herlihy~\cite{XueH2021}
is structured like an iterated two-phase commit protocol:
in the first phase the parties escrow premiums.
If all goes well, in the next phase they escrow coins.
If all goes well in the final phase they complete the swap.
Any party who drops out of the protocol ends up
paying a premium to the other,
and in the end, Alice has the optionality,
but Bob has been compensated for his risk.

We are now ready to address the main topic of this paper.
As long as Alice has optionality, that optionality has value.
Alice should be able to sell her position to a third party Carol.
(Bob may also want to sell his position to a third party David). For starters, we focus on Alice's position transfer since Bob's transfer will be similar.
There are many reasons Alice and Carol might agree to such a deal.
Alice may want to liquidate her position because she needs cash.
Perhaps Alice and Carol have different opinions on the future values of
florins versus guilders,
or they have different levels of risk tolerance.

A \emph{transferable atomic swap option} (or ``transferable swap'') protocol
roughly must satisfy the following properties.
(We define these properties more precisely in \secref{sec:properties}.)
\begin{itemize}
\item \emph{Liveness}: If Alice and Carol both follow the protocol,
  then (1) Carol acquires the right to buy Bob's coins at the same price and deadline,
  (2) Alice loses that right, and she is paid by Carol,
  (3) Bob cannot veto the transfer.
\item \emph{Safety}: As long as one of Alice or Carol follows the protocol,
  (1) the protocol completes before the option expires, and
  (2) when the protocol completes,
  exactly one of Alice or Carol has the right to trigger the atomic swap with Bob, (3) Bob's position in the option does not change if he conforms. 
\end{itemize}
Here is a high-level summary of our protocol.
The contract linking Alice and Bob is a delayed swap:
Alice escrows her asset, then Bob escrows his.
These escrow contracts are controlled by a \emph{hashlock} $h$.
Until each contract's timeout expires,
that contract will complete its side of the swap
when Alice produces a \emph{secret} $A_1$ such that $h = H(A_1)$,
where $H(\cdot)$ is a predefined cryptographic hash function.
Alice has optionality because she alone knows the secret.

Here is how Carol can buy Alice's position.
First,
there must be enough time to complete this transfer before
Alice's option expires.
Roughly speaking,
Carol generates a secret,
then Alice and Carol swap the roles of Carol's and Alice's secrets:
following the transfer, Alice's secret will no longer trigger the swap with Bob,
but Carol's secret will.

There are two contracts (discussed in detail below).
Contract $AB$ controls Alice's possible payment to Bob,
while $BA$ controls Bob's payment to Alice.
The optionality transfer itself is structured like a cross-chain swap,
exchanging roles instead of assets.
In the first phase,
Alice marks contracts $AB$ and $BA$ as \emph{mutating},
temporarily preventing any transfer to or from Bob.
This step prevents Alice or Carol from creating a chaotic
situation by triggering a partial asset swap with Bob while the
optionality transfer is in progress.
In the second phase,
Carol replaces the (hash of) Alice's secret and address
in both $AB$ and $BA$ with the (hash of) Carol's secret and address.
Just as for regular swaps,
the optionality transfer protocol will time out and revert if Alice or Carol
fails to take a step in time.

There is still a danger that Alice and Carol might cheat Bob
by making inconsistent changes to contracts $AB$ and $BA$.
Because $AB$ and $BA$ cannot coordinate directly,
the protocol has a built-in delay to give Bob an opportunity to \emph{contest}
a malformed transfer and to revert its changes.
If the transfer is well-formed,
Bob can expedite the protocol by actively approving of the transfer,
perhaps in return for an extra premium.
If Bob remains silent, the protocol will proceed after the delay expires.
Bob contests the transfer by providing proof that Alice signed inconsistent
changes to $AB$ and $BA$.

To keep the presentation uncluttered,
we omit some functionality that would be expected in a full protocol,
but that is not essential for optionality transfer.
For example,
there would be additional steps where parties deposit premiums
to compensate one another if one party leaves the other's assets temporarily trapped in escrow,
where Alice pays Bob a premium to encourage prompt transfer approval,
and where Alice posts a bond to be slashed if she is caught sending
inconsistent information to $AB$ and $BA$.

\section{Two-Party Transferable Swap}
\label{sec:two_party_swap_transfer}
\label{protocol}
\subsection{Overview}

We will now describe several protocols for transferring option positions.
First we will describe a protocol that allows Alice to transfer her position as the option owner in a swap with Bob, to a third-party Carol.
Next, we will give a protocol that allows Bob to transfer his position as the option provider in a swap with Alice, to a third-party David.
Finally, we show that both of these protocols can be run concurrently and can also support multiple concurrent buyers.
Solidity code for these contracts appears in Appendix \ref{sec:code}.

A party that follows the protocol is \emph{conforming},
and a party that does not is \emph{adversarial}.
A party's \emph{principal} is the assets escrowed at a contract
during the protocol execution.
We use \emph{transfer} and \emph{replace} interchangeably.
First we describe the initial setup where Alice and Bob first lock their principals using a hashlock.
Throughout this paper, the party who owns the preimage to the hashlock (Alice,the initial owner of the option) is called a \textit{leader} and the counterparty (Bob) is called a \textit{follower}.
This terminology is similar to prior two-party, HTLC-based cross-chain swap
protocols~\cite{tiernolan,decred2,sparkswap}.

\paragraph{Setup}
\begin{enumerate}
    \item Agreement. Alice and Bob agree on a time $\startLeader$ to start the protocol, usually shortly after this agreement. Alice and Bob agree on the amounts  $\Asset_A$, $\Asset_B$ forming their principals,
as well as the number of rounds $dT$\footnote{We inherit the requirement $dT \geq 4$ from the standard two-party swap protocol.}, where after $T = \startLeader + dT \cdot \Delta$ when Alice's optionality on the $BA$ contract expires. 
    Alice generates a secret $A_1$ and $H(A_1)$ which is used as $\swapHashlock$ to redeem both principals as in a typical two-party swap.
    \item  Escrow. After the agreement achieved, 
    
    \begin{enumerate}
    \item Within $\Delta$, Alice publishes the
    $AB$ contract, escrowing $\Asset_A$, setting $H(A_1)$ as the $\swapHashlock$ and set $T+\Delta$ as the timeout for $AB$ contract to expire. 
    
    \item \sloppy Once Bob sees $\Asset_A$ is escrowed with the correct timeout, before $\Delta$ elapses, Bob creates the $BA$ contract, escrowing $\Asset_B$, setting $H(A_1)$ as the $\swapHashlock$ and set $T$ as the timeout for $BA$ contract to expire.
    \end{enumerate}

\end{enumerate}

A swap option between Alice and Bob, where Alice has optionality,
is determined by Alice and Bob's addresses, i.e. $\sender$ and $\receiver$ in contracts,
the $\swapHashlock$ used for redemption, and the timeout $T$ for refunds.
To make the swap option transferable to Carol,
we will need to replace Alice's address with Carol's as $\sender$ in the $AB$ contract 
and as $\receiver$ in the $BA$ contract, and replace the $\swapHashlock$ with a new one generated by Carol.
\sloppy In short, we must atomically replace the fields $(AB.\sender, BA.\receiver,AB.\swapHashlock, BA.\swapHashlock)$.

The contracts export the following functions of interest.
\begin{itemize}
\item Asset transfer related: the receiver calls $\claim()$ to withdraw the asset in the contract.
  This function requires that the preimage of the $\swapHashlock$ is sent to $\claim()$ before the contract expires.
  The sender calls $refund()$ to withdraw the asset after the contract expires. 
    \item Option transfer related: $\mutateLockLeader()$ temporarily freezes the assets and stores a $\replaceHashlock$ used to replace Alice's role (the replacement between Carol and Alice is also enforced by a hashlock mechanism) and the new $\swapHashlock$ if Carol successfully replaces Alice. 
The call to $\replace()$ completes the replacement for Carol.
\end{itemize}

For example, consider the $AB$ contract.
The \mutateLockLeader() function freezes the assets,
tentatively records the new $\sender$ as $\candidateSender$,
the new $\swapHashlock$ for redemption,
and records a short-term $\replaceHashlock$ to be used by $\replace()$
to finalize the replacement. Later, when $\replace()$ is called with the secret
matching the $\replaceHashlock$, these tentative changes become permanent.
The $\replaceLeader()$ function has a timeout.
If that timeout expires,
the internal $\revertLeader()$ function unfreezes the assets, discards the tentative changes, 
and restores the contract's previous state.

The challenge is how to ensure the transfer is atomic on both contracts.
If the transfer is not atomic,
Bob will be cheated if his principal can be claimed but he cannot claim Alice's principal.
To protect Bob,
the contract allows Bob to inspect the changes
and ensure atomicity by relaying changes on one contract to another.

\subsection{Transfer Leader Position}
\label{transfer_leader}

Our protocol will consist of 3 phases, each of which we describe in turn:
\begin{enumerate}
    \item \emph{Mutate Lock Phase}. Alice locks the assets on both contracts and tentatively transfers her position to Carol.
    \item \emph{Consistency Phase}. Bob makes sure the tentative changes on both contracts are consistent.
    \item \emph{Replace/Revert Phase}. Carol replaces Alice's role or she gives up and Alice gets her option back.
\end{enumerate}
\paragraph{I: Mutate Lock Phase}

Assume Alice and Carol agree on $\Asset_C$ for transferring her role to Carol at time $\startLeader$, which is the start time the following executions are based on\footnote{The protocol requires that $\startLeader \leq T-9\Delta$. Otherwise, there may not be sufficient time to complete the replacement.}. Carol has secrets $C_1,C_2$ and generates hashlocks $H(C_1),H(C_2)$. Let $C_{msg}= [H(C_1),H(C_2)]$ be the message Carol sends to specify that she would like to use $[\replaceHashlock$, $\swapHashlock]$, respectively.

In this phase, Carol prepares tentative payment $\Asset_C$ to Alice to replace Alice's role. Alice locks the $AB$ and $BA$ contracts so that the currently escrowed assets cannot be claimed from the original swap.

\begin{enumerate}
    \item  By $\startLeader+\Delta$, Carol creates the $CA$ contract, escrowing $\Asset_C$, setting the $\swapHashlock$ as $H(C_1)$ for Alice to redeem  $\Asset_C$, and the timeout to be $\startLeader+9\Delta$.  
    She also sends $C_{msg} = [H(C_1),H(C_2)]$ to Alice. 
    
    \item If the previous step succeeds, before $\Delta$ elapses,
    Alice concurrently calls $\mutateLockLeader()$ on both $AB$ and $BA$ contracts, tentatively setting $[\replaceHashlock=H(C_1),\swapHashlock=H(C_2)]$ to both contracts, and  setting $\candidateSender = \Carol.\addressVar$ on $AB$ and $\candidateReceiver = \Carol.\addressVar$ on $BA$ contract. If Alice is conforming, $\mutateLockLeader()$ should be called by $\startLeader+2\Delta$ on both $AB$ and $BA$.
\end{enumerate}

\paragraph{II: Consistency Phase}

It is possible that in the worst case, in the Mutate Lock Phase, Alice could report inconsistent tentative changes to $AB,BA$ by either mutating one contract but not the other, or by reporting different hashlocks to each contract.
Because the two contracts cannot communicate, they have no way of knowing what is happening on the other contract.
One such example of an attack is shown in  Fig. \ref{figure:execution}.

Because of these attacks, in the \emph{Consistency Phase}, Bob is given a period to ensure both $AB$ and $BA$ have the same changes.
If one of the contracts is not changed, then Bob forwards the change (by forwarding Alice's signature on the contract she signed) to the other contract. 
If both contracts are tentatively changed but the changed are different, then Bob can call $\contestLeader()$ to prove that Alice has lied and reported inconsistent changes signed by Alice.

\begin{enumerate}
    \item If Bob sees that Alice only calls $\mutateLockLeader()$ on $AB(BA)$, then within $\Delta$, Bob should call 
    $\mutateLockLeader()$ on $BA(AB)$ contract and set $[\replaceHashlock=H(C_1),\swapHashlock=H(C_2)]$ by forwarding Alice's signature.

    \item If Bob sees Alice has called $\mutateLockLeader()$ on both $AB$ and $BA$ , but the changes are different, i.e. either one or more hashlocks are not the same, or the new candidate sender does not correspond to the new candidate receiver, then before a $\Delta$ elapses, Bob should call $\contestLeader()$ on both contracts, forwarding Alice's signature on one contract to the other. In this case, the tentative change will be reverted.

    \item If Bob sees Alice has called $\claim()$ with her secret $A_1$ on $BA$ to redeem $\Asset_B$, and she also successfully called $\mutateLockLeader()$ on $AB$, then Bob should call $\contestLeader()$ on $AB$  before a $\Delta$ elapses to revert the tentative change.
    Bob then calls $\claim()$ using $A_1$ on $AB$ to redeem $\Asset_A$.
\end{enumerate}

Importantly, Bob is able to contest when either Alice reports inconsistent mutation signatures between contracts or if she preemptively reveals her swap secret.
Bob's ability to call $\contestLeader()$ ensures that
he's able to maintain his position in the original trade if Alice deviates from the protocol by reporting inconsistent changes. 

\paragraph{III: Replace/Revert Phase}

After the Consistency Phase has ended, in the Replace/Revert Phase, the swap either finalizes to one between Carol and Bob or reverts back to the original swap between Alice and Bob.
Once Carol sees consistent changes with no successful $\contestLeader()$ calls by Bob, she calls $\replaceLeader()$ to make herself the new leader of the swap. Figure~\ref{fig:all_conforming} shows an execution of the protocol where all parties are conforming.
If Bob sees Carol only call $\replaceLeader()$ on one contract, he uses her revealed secret $C_1$ to finalize the replacement on the other contract.

\begin{enumerate}
    \item When Carol sees the tentative changes are the same on both $AB$ and $BA$ contract, and Bob is not able to contest anymore on both contracts \footnote{If Bob mutates a contract first, there is no contest window on that contract (He is implicitly approving of the tentative changes). If Alice mutates a contract first, Bob is given $2\Delta$ to dispute that mutation.}, then before $\Delta$ elapses, she calls $\replaceLeader()$ on both contracts with her secret $C_1$.
    \item If Bob sees Carol only called $\replaceLeader()$ on one contract $BA(AB)$ , then within $\Delta$ he calls $\replaceLeader()$ on $AB(BA)$.
    \item If Carol gives up the replacement and does not call $\replaceLeader()$ when she can, i.e. $(now - AB(BA).\mutation.\startTime > 6*\Delta)$, then the tentative changes can be reverted by $\revertLeader()$ causing the assets to be unfrozen.
    \item If Alice sees $C_1$ which is passed by $\replaceLeader()$, then within $\Delta$ she calls $\claim()$ to claim $\Asset_C$ on the $CA$ contract.
\end{enumerate}

The Replace/Revert Phase marks the point of the protocol where Alice's position is actually transferred to Carol.  
\begin{figure}
    \centering
    \includegraphics[scale =1, width=\linewidth]{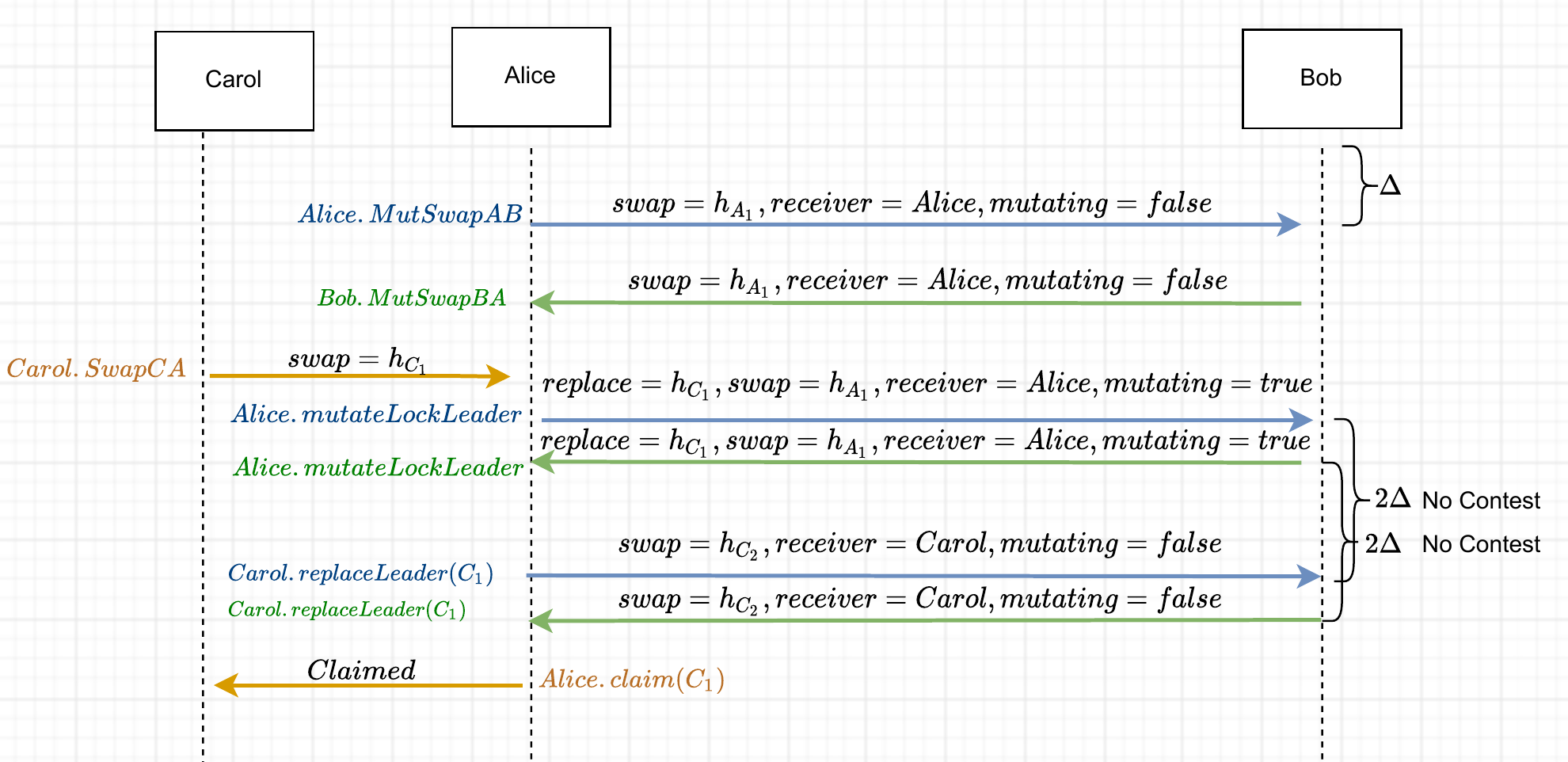}
    \caption{In the figure, $x.Function$ means party $x$ calls $Function$ in the contract. E.g. $Alice.MutSwapAB$ means Alice creates the swap contract $AB$ and escrows her assets. The blue arrow depicts contract $AB$ and the green arrow depicts  contract $BA$ and the orange one depicts contract $CA$. The text above each arrow depicts the state of the contract. For example, $swap =h_{A_1}, receiver= Alice, mutating =false$ means $swap\_hashlock =h_{A_1}$, and Alice can claim the asset in the contract if she provides preimage of $h_{A_1}$. $replace$ is short for $replace\_hashlock$.  }
    \label{fig:all_conforming}
\end{figure}
Before this phase, Alice has only set up a tentative transfer to Carol. If Carol changes her mind and gives up the replacement, the tentative swap between Carol and Bob can revert back to the original one between Alice and Bob by calling $\revertLeader()$. Instead of calling $\revertLeader()$ directly, Alice and Bob can alternatively call $\mutateLockLeader()$,$\refund()$,or $\claim()$ at this point in the protocol since these will all automatically call $\revertLeader()$.
Alice  would do this if after an unsuccessful attempt to transfer her swap, she wants to attempt another transfer, reclaim her escrowed funds, or exercise the swap.

\paragraph{Timeouts}

Timeouts are critical to guarantee the correctness of our protocol. They were omitted in the earlier protocol descriptions for simplicity. Here, we provide the timeouts we set in each step.

The $BA$ contract expires at time $T$,
the last time for the leader to send her secret to redeem the principal.
The $AB$ contract expires at time $T+\Delta$.

Denote the time when $\mutateLockLeader()$ is called on $AB(BA)$ contract  as $AB(BA).\mutation.\startTime$.
After the mutation starts on any contract, Bob is given time $2\Delta$ to contest Alice's mutation:

$AB(BA).\contest.\timeout=AB(BA).\mutation.\startTime+2\Delta$.

When Bob is able to contest, and Bob has not called $\mutateLockLeader()$ himself, Carol cannot call $\replaceLeader()$. After the contest period elapses, Carol has $2\Delta$ to call $\replaceLeader()$. Here we have $AB(BA).\replace.\timeoutCarol = AB(BA).\mutation.\startTime+4\Delta$.

If Carol deviates and only calls $\replaceLeader()$ on one contract, we allow Bob to call $\replaceLeader()$ on the another contract. We give Bob $2\Delta$ more than Carol to call $\replaceLeader()$.   That is, $AB(BA).\replace.\timeoutBob = AB(BA).\mutation.\startTime+6\Delta$.

We see that it takes $6\Delta$ from the start of a $\mutateLockLeader()$ call to finalize a replacement in the worst case. Thus, $AB(BA).\mutation.\startTime+6\Delta \leq T-\Delta$, which means $AB(BA).\mutation.\startTime \leq T-7\Delta $.

The last person that is able to call $\mutateLockLeader()$ is Bob.
He is given one more $\Delta$ than Alice  on each contract to call this function in case Alice decides to just call it on one contract.
The deadline for Alice to call $\mutateLockLeader()$ should be $T - 7\Delta$.
The deadline for Bob to call $\mutateLockLeader()$ should be $T - 6\Delta$.
Consider the case when Alice is adversarial by not calling $\mutateLockLeader()$ on $BA$ but does call it on $AB$ by $T - 7\Delta$.
If Bob is compliant, then he calls $\mutateLockLeader()$  by  $T-6\Delta$ on the $BA$ contract.
Because we require $AB.\mutation.\startTime \leq T-7\Delta$, Bob is guaranteed $6\Delta$ to a do a full replacement in the worst case.
Conforming Carol's safety is also guaranteed since she can give up the replacement if she finds there is not sufficient time to call $\replaceLeader()$ then $\claim()$. In that case, due to the timeout of the original swap, adversarial Alice loses her opportunity to sell her position.

For CA edge, after Carol escrows $\Asset_C$, in the worst case it takes $7\Delta$ to complete the replacement, ($\Delta$ to start the mutation and $6\Delta$ to complete the replacement). After the replacement, it takes one $\Delta$ for Alice to redeem. Thus, the timeout for $CA$ edge is $\startLeader+9\Delta$.

The reason why we have $2\Delta$ for Bob to contest, and $2\Delta$ more than Carol to call $\replaceLeader()$ is that the start time of mutation can be staggered by $\Delta$ on two contracts, which will be described in detail in proof.

\subsubsection{An Optimization}
\label{transfer_leader_altruistic}

Because of the $\contestLeader()$ time window, in the worst case, Carol has to wait $2 \Delta$ after a $\mutateLockLeader()$ call to decide whether to release her secret or not. We can speed up the process by adding an $\approveLeader()$ method. 
Instead of waiting for $2\Delta$ even though Bob does not contest at all, once Bob sees Alice has made consistent mutations on both edges, Bob can call $\approveLeader()$ to approve the pending transfer. In other words, by calling $\approveLeader()$, Bob gives up the right to contest meaning Carol will not have to wait out the contest window in order to call $\replaceLeader()$. In that case, if Bob cooperates, the transfer would be finalized sooner. If Bob doesn't cooperate, the transfer would be finalized later but not stoppable if Alice and Carol are conforming.

This can be accomplished by adding to the $AB(BA)$ contracts an $approved$ flag indicating whether Bob has given up his ability to call $\contestLeader()$.

Phases II-III are the only ones affected.

We add the following instruction to the \emph{Consistency Phase}:
\begin{enumerate}
    \item Once Bob sees Alice has called $\mutateLockLeader()$ on both $AB$ and $BA$ with the same change, then before a $\Delta$ elapses, Bob should call $\approveLeader()$ on both $AB$ and $BA$.
\end{enumerate}
and to the \emph{Replace/Revert Phase}:
\begin{enumerate}
    \item If Carol sees Bob approved the tentative consistent changes on both $AB$ and $BA$, then before a $\Delta$ elapses, she calls $\replaceLeader()$ on both $AB$ and $BA$.
\end{enumerate}

If in addition to conforming to the base protocol in Section 5.1, Bob also executes the approve step shown previously, then we call Bob \emph{altruistic}.
This just indicates that Bob is willing to speed up the termination of the protocol even if it doesn't necessarily change his financial position.

\begin{figure}
        \centering
        \includegraphics[width = \linewidth]{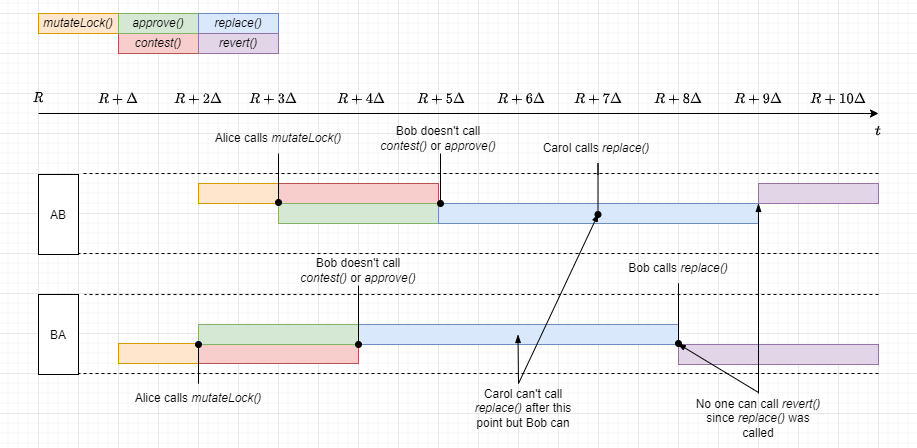}
        \caption{A protocol execution demonstrating why Bob needs two more \textit{replaceLeader()} rounds than Carol.
        If not, then he lacks full $\Delta$ to call \textit{replaceLeader()} on $BA$ after Carol calls it on $AB$.
        Here we assume Alice has reported consistent signatures in the \emph{Mutate Lock Phase} but has reported them a $\Delta$ apart.
        Colored blocks represent time periods when functions can be called.
        The \textit{Leader} suffix is excluded from the function calls for simplicity.
        } 
        \label{figure:execution}
\end{figure}

\subsection{Transfer Follower Position}
\label{transfer_follower}

We will now describe a protocol that allows Bob to transfer his position in a swap with Alice, to a third-party David.
The main difference with this protocol and the previous one, is that Bob cannot limit Alice's optionality.
Namely, he cannot unilaterally lock the asset in the $BA$ contract since this would limit the optionality Alice purchased.
Effectively Alice must always be able to claim the funds on $BA$ with her secret until the swap itself times out.
This weakening of the constraint that we had in the previous protocol, allows for a much simpler protocol.
It consists of 2 phases:
\begin{enumerate}
    \item \emph{Mutate Phase}
    \item \emph{Replace/Revert Phase}
\end{enumerate}

\paragraph{I: Mutate Phase}

\begin{enumerate}
    \item  Bob and David agree on $\Asset_D$ at time $\startFollower$, similar to the replacing leader protocol. David creates $DB$ contract, escrowing $\Asset_D$, setting the $\swapHashlock$ as $H(D_1)$ for Bob to redeem  $\Asset_D$ and the timeout be $\startFollower+5\Delta$. 
    \item  If the previous step succeeds, and Alice has not revealed $A_1$ before $\Delta$ elapses,
    Bob concurrently calls $\mutateLockFollower()$ on $AB$ and $\mutateLockFreeFollower()$ on $BA$ contracts, tentatively setting $[\replaceHashlock=H(D_1)]$ to both contracts, and  setting $\candidateSender = \David.\addressVar$ on $BA$ and $\candidateReceiver = \David.\addressVar$ on $AB$.
    We use $\mutateLockFreeFollower()$ because on the $AB$ contract,
    this mutation does not lock the asset on $AB$. 
    This is necessary to ensure Alice does not temporarily lose her optionality.
\end{enumerate}

\paragraph{II: Replace/Revert Lock Phase}

\begin{enumerate}
    \item If David sees Bob successfully call $\mutateLockFollower()$, $\mutateLockFreeFollower()$ on both $AB$ and $BA$ respectively, then within $\Delta$ he concurrently executes $\replaceFollower()$ with $D_1$ on $AB$ and $BA$ to replace Bob.
    \item If Bob sees David call $\replaceFollower()$ on either $AB$ or $BA$, within $\Delta$ he should call $\claim()$ on $DB$ to claim $\Asset_D$.
\end{enumerate}

From this point on, we will use Protocol \ref{transfer_leader} to refer the protocol in Section \ref{transfer_leader}, Protocol \ref{transfer_leader_altruistic} to the refer to the protocol \ref{transfer_leader_altruistic} with extra steps from Section \ref{transfer_leader_altruistic}, Protocol \ref{transfer_follower} to refer to the protocol in Section  \ref{transfer_follower},  and Protocol \ref{sec:multi_candidate} to refer to the protocol in Section  \ref{sec:multi_candidate}. 

\subsection{Handling Multiple Candidates}
\label{sec:multi_candidate}
\label{multi_transfer_leader}

When Alice decides to tentatively transfer her swap option to Carol, it could be the case that Carol is intending to just grief Alice.
In the worst case, Alice will have her option locked for $6\Delta$ by Carol that simply doesn't participate in the \textit{Replace/Revert} Phase of Protocol \ref{transfer_leader}.

To combat this, Alice might want to initiate several concurrent transfers with multiple buyers at once.

A straightforward solution to this is, after Alice gets her position back from a failed transfer with $Carol_i$, Alice can call $mutateLockLeader()$ again to indicate she wants to transfer her option to a new option buyer $Carol_{i+1}$.
In the worst case, in this protocol, Alice would spend $6\Delta$ with her option locked for each potential buyer who doesn't comply.

One question then is, can we have Alice potentially transfer her position to multiple buyers at the same time, or with less waiting time than 6$\Delta$ in between each new buyer? 
Although we cannot let Alice transfer her option to multiple buyers simultaneously, we can create enough overlap between each of the potential to reduce Alice's waiting time for each potential buyer.
The protocol description below outlines how Alice is able to reduce her waiting time for each potential buyer to $4\Delta$ rather than 6$\Delta$.

The main idea of the protocol is to use to run a concurrent version of Protocol \ref{transfer_leader} for each potential buyer.
Alice assigns each potential buyer a ticket (\textit{sequence number}). A shared \textit{counter} is synchronized between the $AB$ and $BA$ contracts.
The shared \textit{counter} assigns an ordering to each of the potential buyers to execute their respective versions of Protocol \ref{transfer_leader}.
In this way, it serves as a first-come first-serve mechanism for Alice to sell her swap option.

We will use $Carol_i$ to denote the buyer who is assigned the $i$-th sequence number by Alice. 
On each contract $AB$ and $BA$, each potential buyer $Carol_i$, will have their own associated state structure $state_i$, similar to Protocol \ref{transfer_leader}.
This keeps track of the state relevant to the 3 different phases from Protocol \ref{transfer_leader}.

The protocol is defined as follows:
\begin{itemize}
    \item Initially \textit{counter} on \textit{AB} and \textit{BA} contracts is initialized to $0$.
    \item For each potential buyer, Alice assigns a unique sequence number \textit{seq} (starting from 0 and growing by 1 for each assignment) to them, which represents that party's position in the queue for a tentative replacement.
    If Alice assigns the same sequence number to different parties and sends both of them to the \textit{AB/BA} contracts, the conflict is resolved via the \textit{Consistency Phase} from the original Protocol \ref{transfer_leader}.
    It is the same as when Alice deviates by sending inconsistent mutation transactions to both contracts, in which case Bob contests and the tentative transfer is reverted.
    \item Alice sends mutation transactions as in Protocol \ref{transfer_leader}.
    The main difference is that the mutation transactions now include a sequence number \textit{seq} for the candidate replacement. 
    The $mutateLockLeader()$, $ contestLeader()$ functions all take this sequence number as parameters as part of the mutation.
    If $counter==seq$, the transaction is accepted. Otherwise, the transaction is rejected.
    
    If Alice is conforming, \textit{counter} should be synchronized on \textit{AB/BA} within $\Delta$.
    If a tentative transfer is in progress, a new mutation transaction with $seq==counter+1$ will be accepted only if $4 \Delta$ has elapsed after the first mutation. A mutate transaction with $seq==counter$ can serve as mutation transaction for a contest of the current tentative transfer. 
    After the tentative change is reverted, i.e.  \textit{revertLeader()} is called, the \textit{counter} is incremented by 1.  This design can be optimized by accepting the mutation transactions with larger sequence numbers and store them in a queue for future use.
    These can then take effect immediately after
    $4 \Delta$ has elapsed after the current tentative mutation happens and \textit{revertLeader()} has been called. The reason why $4\Delta$ is sufficient interval between the execution of base protocols for two potential buyers is that, after $4 \Delta$, there is only $2\Delta$ left for the mutation to be reverted if the previous potential buyer gives up. The later arriving buyer can use this window to finish its consistency phase and after it ends, its replace phase can start without waiting.
    
    \item A mutation transaction can take effect (The corresponding candidate can call \textit{replaceLeader()} ) only after the previous mutation transaction has expired, meaning it is reverted and Alice regains the position. Then after that the new candidate can execute Protocol \ref{transfer_leader} to replace Alice.
\end{itemize}

The protocol for transferring the follower position to multiple candidates is similar.

\section{Security Properties and Proof}
\seclabel{sec:properties}
\subsection{General Transfer Properties}
\label{general_properties}
Our leader and follower transfer protocols overlap in some properties they satisfy.
We outline those properties here.

 Recall that assets escrowed in the contracts are called \emph{principal}. The principals involved in the swap option are always Alice's principal and Bob's principal. Here, the original \emph{option provider} is Bob (follower) and the \emph{option owner} is Alice (leader). In the leader transfer protocol and the follower transfer protocol, what is transferred is a position in the swap option.  If Alice transfers her position it is to Carol, and Bob to David.  We say one party owns a position in an option if they are \emph{option owner} (Alice/Carol):  one can release a secret, receiving Bob's principal by relinquishing Alice's, or let the option expire and Alice's principal is refunded to them, or \emph{option provider} (Bob/David): one provides the option owner the right, but not obligation, for the exchange of Alice's principal with Bob's.  

 From now on, we call a leader/follower who wants to transfer their position as position seller, and the one who wants to replace them as position buyer. 
\begin{itemize}
    \item  \textit{Liveness}: In a transfer of positions, if all parties are conforming, then leader/follower transfers their position to a buyer and the position seller gets proper payment from the buyer.
    
    \item \textit{Transfer independence}: A compliant position seller (leader/follower) transferring their position to another compliant position buyer, can successfully transfer their position without the cooperation of a third counterparty (follower/leader).

      \item \textit{Non-blocking transfer with adversarial counterparty}: If a compliant position seller (leader/follower) is transferring their position to another compliant position buyer, a third counterparty (follower/leader) cannot interfere with the transfer.
    
      \item \textit{Transfer atomicity}: If a compliant position seller (leader/follower) loses their position to the buyer, then they receive the expected principal from the position buyer.

     \item \textit{No UNDERWATER for a conforming party (Safety)}: \textit{UNDERWATER} means a party loses their outgoing principal without getting their incoming principal.
  No \textit{UNDERWATER} guarantees a conforming party's safety since it will never end up with losing their principal without getting principals from others.
\end{itemize}

\begin{theorem}
    \label{underwater_thm}
    Protocol \ref{transfer_leader} and \ref{transfer_follower} satisfy No UNDERWATER for Alice, Bob, Carol and David:
    \begin{itemize}
        \item If Alice is conforming, then if she loses her principal, she either gets Bob's or Carol's principal, or both.
        \item If Bob is conforming, then if he loses his principal, he gets Alice's principal or David's, or both.
        \item If Carol is conforming, then if she loses her principal,  she gets Alice's principal or Bob's principal, or both.
        \item If David is conforming, then if he loses his principal, he gets Alice's principal or Bob's principal, or both.
    \end{itemize}
\end{theorem}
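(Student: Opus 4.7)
My plan is a case analysis by conforming party. For each party I identify the contract in which their outgoing principal is escrowed, enumerate the reachable states in which that principal leaves the contract, and exhibit, for each such state, a contract action by which the conforming party collects their incoming principal within the synchronous delay bound $\Delta$. The two global facts I will lean on are: (i) a hashlocked asset can only be withdrawn before expiry by exhibiting the current $\swapHashlock$ preimage, and on expiry it is refunded to the sender-of-record; and (ii) because the model is synchronous, any preimage that appears on-chain is visible to every watching party within $\Delta$, so a conforming party can always relay a revealed secret to the mirror contract in time.

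For Alice and Bob, loss of principal always implies that some preimage of the current swap hashlock has been published on-chain. Alice's principal sits in $AB$ with $\swapHashlock \in \{H(A_1), H(C_2)\}$: if it is claimed with $A_1$, then since only Alice knows $A_1$ initially, she must already have called $\claim$ on $BA$ (receiving Bob's principal); if it is claimed with $C_2$, the replacement has completed, so $C_1$ was published during $\replaceLeader$ and a conforming Alice calls $\claim$ on $CA$ within $\Delta$, collecting Carol's principal; and if $AB$ expires, a conforming Alice is by then no longer sender-of-record, so the refund goes to Carol and Alice again holds Carol's principal from $CA$. Bob's argument is symmetric on $BA$, using $D_1$ and $DB$ in place of $C_1$ and $CA$ for the follower-transfer case.

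For Carol and David the direction is reversed: their principals in $CA$ and $DB$ can only be claimed by exhibiting $C_1$ or $D_1$, secrets they alone possess. A conforming Carol reveals $C_1$ only via $\replaceLeader$, and the protocol only allows her to invoke it after she has observed consistent tentative changes on both $AB$ and $BA$ past the contest window; once she does so on both contracts, $AB$ has her as the new sender with $\swapHashlock = H(C_2)$ and $BA$ has her as the new receiver with the same hashlock, so she can either refund $AB$ after expiry (receiving Alice's principal) or $\claim$ on $BA$ with $C_2$ (receiving Bob's). The analogous statement for David follows from $\replaceFollower$: after the call he is receiver-of-record on $AB$ (claimable with $A_1$ whenever Alice next reveals it) and sender-of-record on $BA$ (refundable on expiry).

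The main obstacle is handling the non-atomicity of mutations across independent chains: an adversarial Alice might call $\mutateLockLeader$ on only one side, or report divergent hashlocks, and the two contracts cannot communicate directly. To close this gap I will appeal to the timing inequalities established at the end of Section \ref{transfer_leader}: Bob enjoys a full $2\Delta$ contest window after each mutation, Alice's mutation deadline of $T-7\Delta$ leaves Bob a further $\Delta$ to forward or $\contestLeader$ the discrepancy, and Carol is only permitted to call $\replaceLeader$ once that window has closed on both contracts. These windows, together with the atomicity of each individual contract call, rule out precisely the intermediate states in which a conforming party might lose principal without acquiring an incoming one, and so exhaust the case analysis.
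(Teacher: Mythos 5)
Your overall strategy---a per-party case analysis that traces which hashlock preimage was used to withdraw the conforming party's outgoing principal and then exhibits the compensating claim---is a reasonable reorganization of what the paper does (the paper instead builds an explicit contract state machine, $Ready2Claim \to MutateLockContestable \to MutateLockNonContestable \to Ready2Claim(H(C_2))$, and proves cross-contract state-consistency lemmas). Your Alice, Carol, and David cases line up with the paper's arguments and would go through.

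The genuine gap is in the Bob case, which you dismiss as ``symmetric'' to Alice's. It is not: Alice's safety follows almost for free from her own conformance (she issues consistent mutations, so Bob cannot contest, and she merely watches for $C_1$), whereas Bob's safety is the entire reason the Consistency Phase and the staggered timeouts exist, and it requires Bob to \emph{act}. Concretely, two threats to Bob are not covered by the inequalities you cite (the $2\Delta$ contest window and Alice's $T-7\Delta$ mutation deadline). First, an adversarial Alice can claim $BA$ with $A_1$ while simultaneously putting $AB$ into a mutate-locked state, so Bob's principal leaves $BA$ while $AB$ is frozen against his $\claim(A_1)$; ruling this out needs Consistency Phase step 3 (Bob calls $\contestLeader()$ on $AB$ to unfreeze it, then claims with $A_1$), i.e., the case analysis of Theorem \ref{theorem:goodbob}. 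Second, in the Replace phase an adversarial Carol can call $\replaceLeader()$ with $C_1$ on $BA$ only, at the last admissible instant; your fact (ii) (``a conforming party can always relay a revealed secret to the mirror contract in time'') is exactly what must be \emph{proved} here, and it rests on the $4\Delta$-versus-$6\Delta$ replace deadlines combined with the at-most-$\Delta$ stagger of mutation start times, giving $t_{\mathrm{Bob}} \geq t_{\mathrm{Carol}} + \Delta$ as in Theorem \ref{theorem:after2.0} and Lemma \ref{lemma:2delta}. Your proposal never invokes these particular deadlines, so as written the Bob branch of the case analysis does not close.
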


\begin{proof}
See Appendix \ref{underwater_proofs}.
\end{proof}

\subsection{Leader Transfer Properties}
\label{leader_properties}

We first demonstrate how our leader transfer protocol satisfies the general transfer properties from \ref{general_properties}.

Because the detailed proofs of the following results are long and similar to those that we will see for the follower transfer protocol \ref{follower_properties}, we omit them here.
See the Appendix \secref{proof} for details.

\begin{restatable}{theorem}{leaderLivenessThm}
\label{leader_liveness_thm}
Protocol \ref{transfer_leader} satisfies liveness: If Alice, Bob, and Carol are all conforming, then Alice gets Carol's principal, Carol gets Alice's position, and Bob maintains his position.
\end{restatable}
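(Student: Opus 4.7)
The plan is to trace through the three phases of Protocol \ref{transfer_leader} under the assumption that all three parties conform, verifying at each step that the contract states evolve as intended and that the timing budget (driven by $\Delta$) is respected relative to the original swap deadline $T$. At the end, I will read off the three claims (Alice receives $\Asset_C$, Carol holds the leader role in a well-formed swap with Bob, and Bob's obligations are unchanged up to the substitution of counterparty) from the final on-chain state.

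First, in the \emph{Mutate Lock Phase}, conforming Carol publishes $CA$ with $\swapHashlock = H(C_1)$ and timeout $\startLeader + 9\Delta$ and transmits $C_{msg}$ by $\startLeader + \Delta$; within the next $\Delta$, conforming Alice invokes $\mutateLockLeader()$ on both $AB$ and $BA$ with the same $[\replaceHashlock, \swapHashlock] = [H(C_1), H(C_2)]$ and consistent $\candidateSender/\candidateReceiver = \text{Carol}$. By the timeout analysis in the protocol text, $\startLeader \le T - 9\Delta$, so both mutations land by $\startLeader + 2\Delta \le T - 7\Delta$, well before the $AB.\mutation.\startTime \le T - 7\Delta$ constraint. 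In the \emph{Consistency Phase}, a conforming Bob has nothing to do: because Alice's mutations are consistent across contracts and she has not called $\claim()$ on $BA$, none of the three $\contestLeader()$ triggers listed in Phase II fire. Consequently, the tentative changes on both contracts remain intact when the $2\Delta$ contest window elapses.

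Next, in the \emph{Replace/Revert Phase}, once the contest window $AB(BA).\contest.\timeout = \mutation.\startTime + 2\Delta$ expires, conforming Carol calls $\replaceLeader()$ on both contracts with preimage $C_1$ before $\mutation.\startTime + 4\Delta = AB(BA).\replace.\timeoutCarol$. Because Alice's mutations set the matching $\replaceHashlock = H(C_1)$, both contracts accept $C_1$ and commit the tentative substitution, so $AB.\sender = BA.\receiver = \Carol.\addressVar$ and both contracts adopt $\swapHashlock = H(C_2)$; this is exactly the state of the original setup in Section~4.1 with Alice replaced by Carol and $H(A_1)$ replaced by $H(C_2)$, and $C_1$ is now public on-chain. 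Within the subsequent $\Delta$, conforming Alice uses the revealed $C_1$ to call $\claim()$ on $CA$ before its timeout $\startLeader + 9\Delta$, thereby obtaining $\Asset_C$. The three desired outcomes follow immediately: Alice has Carol's principal; Carol holds the leader role against Bob under a hashlock whose preimage only she knows; and Bob's escrowed $\Asset_B$ on $BA$ is untouched, while his claim on $\Asset_A$ under $AB$ persists with the identical timeout $T+\Delta$ but with Carol as the new sender.

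The main obstacle is the bookkeeping for the \textbf{timing budget} rather than any conceptual subtlety. I need to verify that every deadline chain fits inside the windows derived in the ``Timeouts'' subsection: in particular that (i) $\startLeader + 2\Delta$ leaves at least $2\Delta + 2\Delta + \Delta$ of slack before the contract expiries $T$ and $T+\Delta$ so Carol's and Alice's final calls clear, and (ii) the $CA$ timeout $\startLeader + 9\Delta$ is large enough to cover the worst-case delay from $\mutateLockLeader()$ to Alice's $\claim()$. A secondary, lighter point is to argue explicitly that ``Bob maintains his position'' means his financial position (obligation to release $\Asset_B$ in exchange for $\Asset_A$, with the same timeout) is preserved modulo the substitution of counterparty, since $\mutateLockLeader()$ touches only the leader-side fields and the $\swapHashlock$, but does not alter $\Asset_A$, $\Asset_B$, or the expiry times $T$ and $T+\Delta$.
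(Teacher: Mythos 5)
Your proposal is correct and follows essentially the same route as the paper's proof: a phase-by-phase trace under conformance (Carol escrows $CA$ by $\startLeader+\Delta$, Alice mutates both contracts consistently by $\startLeader+2\Delta$, no contest fires, Carol replaces within her window, Alice claims on $CA$ with the revealed $C_1$), with the paper packaging the first step as Lemma~\ref{mutate_lock} where you prove it inline. Your additional remark that $\mutateLockLeader()$ leaves $\Asset_A$, $\Asset_B$, and the expiries untouched is a welcome explicit justification of ``Bob maintains his position,'' which the paper's proof leaves implicit.
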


\begin{restatable}{theorem}{leaderTransferIndThm}
\label{leader_transfer_ind_thm}
Protocol \ref{transfer_leader} satisfies transfer independence: Alice can transfer her position to Carol without Bob's participation.
\end{restatable}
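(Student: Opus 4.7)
The plan is to show that once the initial $AB$/$BA$ swap has been set up (which does require Bob, but is assumed as pre-existing before the transfer protocol begins), the transfer from Alice to Carol can be carried through to completion using only actions by Alice and Carol. The key structural observation is that every action Bob is asked to perform in Protocol~\ref{transfer_leader} is \emph{reactive}: he contests (\contestLeader) an inconsistent or premature move, forwards a missing \mutateLockLeader{} signature from one contract to the other, or picks up a dropped \replaceLeader{} on the contract Carol neglected. Because Alice and Carol are conforming, none of these contingencies arises, so Bob's silence causes no harm.

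I would then walk through the three phases in order. In the \emph{Mutate Lock Phase}, Carol escrows $\Asset_C$ on $CA$ by $\startLeader+\Delta$ and Alice calls \mutateLockLeader{} on both $AB$ and $BA$ by $\startLeader+2\Delta$ with identical, well-formed changes $[\replaceHashlock=H(C_1),\swapHashlock=H(C_2)]$; neither step needs Bob. The \emph{Consistency Phase} is entirely Bob's responsibility, but since Alice's mutations are identical on the two contracts and no swap secret has been revealed, none of the three triggers for a contest or forward applies; the $2\Delta$ window simply elapses with Bob inactive. In the \emph{Replace/Revert Phase}, Carol observes that the mutations on $AB$ and $BA$ are consistent and that Bob's contest window has closed, so she calls \replaceLeader{} with $C_1$ on both contracts; Alice then sees $C_1$ on-chain and calls \claim{} on $CA$ within $\Delta$. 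All these steps are taken by Alice or Carol alone.

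The remaining task is to verify that the deadlines hold when Bob is silent. Since the protocol requires $\startLeader \le T-9\Delta$, Alice's mutations are in by $\startLeader+2\Delta \le T-7\Delta$, so $AB.\mutation.\startTime \le T-7\Delta$ as required. The contest window ends by $\mutation.\startTime+2\Delta \le T-5\Delta$; Carol's two \replaceLeader{} calls finish by $\mutation.\startTime+4\Delta \le T-3\Delta$, well inside $AB.\replace.\timeoutCarol$ and before either swap contract expires ($BA$ at $T$, $AB$ at $T+\Delta$); and Alice's \claim{} on $CA$ completes by $\mutation.\startTime+5\Delta$, which is at most $\startLeader+7\Delta$, inside the $CA$ timeout $\startLeader+9\Delta$.

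The main obstacle is justifying that Carol is \emph{safe} to release $C_1$ during the Replace/Revert Phase without Bob's participation — that Bob cannot retroactively invalidate the replacement by calling \contestLeader{} after Carol commits. This reduces to arguing that a successful \contestLeader{} requires inconsistency in Alice's signatures on $AB$ versus $BA$, and since Alice is conforming she signs identical mutations; hence no valid contest exists and Carol's \replaceLeader{} call after the $2\Delta$ window finalizes the transfer irrevocably. Once this is in hand, the liveness trace above gives the full conclusion.
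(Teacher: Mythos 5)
Your proposal is correct and follows essentially the same route as the paper's proof: both contracts are mutate locked by $\startLeader+2\Delta$ through Alice's and Carol's actions alone (the paper packages this as a lemma), Carol releases $C_1$ once the contest window closes, and every role assigned to Bob is reactive and never triggered when Alice and Carol conform. Your version simply spells out the timing arithmetic and the no-valid-contest argument that the paper states more tersely (and defers in part to its non-blocking theorem).
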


\begin{restatable}{theorem}{leaderNonblockThm}
\label{leader_nonblock_thm}
Protocol \ref{transfer_leader} satisfies non-blocking transfer: Alice can transfer her position to Carol even if Bob is adversarial.
\end{restatable}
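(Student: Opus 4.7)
The plan is to walk through the three phases of Protocol~\ref{transfer_leader} with Alice and Carol conforming and Bob arbitrary, and show at each phase that adversarial Bob has no action available to him that prevents the transfer from finalizing and Alice from collecting $\Asset_C$ before the relevant timeouts.

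First, in the \emph{Mutate Lock Phase} I would observe that neither the creation of the $CA$ contract by Carol nor Alice's calls to $\mutateLockLeader()$ on $AB$ and $BA$ require any action or authorization from Bob. Because Alice conforms, by $\startLeader + 2\Delta$ both contracts record the same tentative values $[\replaceHashlock = H(C_1),\,\swapHashlock = H(C_2)]$ together with $\candidateSender/\candidateReceiver = \Carol.\addressVar$, all signed by Alice on identical mutation messages. The only way Bob could ``beat Alice to the punch'' on a contract would be to call $\mutateLockLeader()$ there himself using a signature from Alice; but that requires one of Alice's already-signed (and thus consistent) mutation messages, so the recorded state is still the intended one. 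So after Phase~I the two contracts are in matching tentative state regardless of Bob's behavior.

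Next, in the \emph{Consistency Phase} I would argue that Bob has no way to derail the transfer. A successful $\contestLeader()$ on a contract requires presenting two mutation messages signed by Alice that disagree on the relevant fields; since Alice signed only one (consistent) message per contract, and signatures are unforgeable, every $\contestLeader()$ call by Bob reverts. Bob cannot call $\claim()$ on $BA$ because he does not know $A_1$. Calling $\mutateLockLeader()$ a second time on an already-mutated contract either is rejected or, at worst, records the same consistent state again; and a possible $\approveLeader()$ call only waives Bob's own contest right. Thus after Phase~II the tentative consistent mutation is still in force on both contracts, and the contest deadline $AB(BA).\contest.\timeout = AB(BA).\mutation.\startTime + 2\Delta$ passes with the state intact.

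In the \emph{Replace/Revert Phase}, since the mutations are consistent and no contest succeeded, Carol, who is conforming, calls $\replaceLeader()$ with $C_1$ on both $AB$ and $BA$ before $AB(BA).\replace.\timeoutCarol = AB(BA).\mutation.\startTime + 4\Delta$. Bob cannot block these calls. Once $C_1$ appears on-chain, Alice, who is conforming, invokes $\claim()$ on $CA$ within $\Delta$; by the timeout analysis of Section~\ref{transfer_leader} the $CA$ timeout is $\startLeader + 9\Delta$, which leaves Alice ample time. So Alice receives $\Asset_C$, Carol holds the leader position, and Bob could not prevent it.

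The main obstacle I would expect in making this rigorous is the timing bookkeeping, in particular handling the $\Delta$ stagger between the start times of $AB.\mutation$ and $BA.\mutation$ (Figure~\ref{figure:execution}): I have to verify that the contest, replace-by-Carol, and replace-by-Bob windows on the two contracts always overlap enough for Carol to act on both within her window, and then for Alice's $\claim()$ on $CA$ to fit before $\startLeader + 9\Delta$, even under worst-case $\Delta$-delayed inclusion caused by Bob's attempts to race transactions. This reduces, as in the protocol's own timeout derivation, to the inequalities $AB(BA).\mutation.\startTime \le T - 7\Delta$ and $\startLeader \le T - 9\Delta$, which I would invoke rather than re-derive.
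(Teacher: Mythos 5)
Your proposal is correct and rests on the same key observation as the paper's (very terse) proof: Bob's only mechanism for derailing the transfer is $\contestLeader()$, which requires either a second, inconsistent mutation message signed by Alice or the preimage $A_1$, and an adversarial Bob can produce neither when Alice conforms and signatures are unforgeable. Your phase-by-phase enumeration of Bob's other possible moves (racing $\mutateLockLeader()$ with Alice's own signature, attempting $\claim()$ without $A_1$, $\approveLeader()$ only waiving his own rights) and the timeout bookkeeping are additional detail the paper omits, but they do not change the argument's substance.
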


\begin{restatable}{theorem}{leaderTransferAtomThm}
\label{leader_transfer_atom_thm}
Protocol \ref{transfer_leader} satisfies transfer atomicity: If Alice loses her position in the swap, then she can claim Carol's principal.
\end{restatable}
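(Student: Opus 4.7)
The plan is to show that whenever Alice loses her position, Carol's secret $C_1$ has been irrevocably written to the blockchain, giving conforming Alice the preimage she needs to claim $CA$ before its timeout expires.

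First I would pin down the hypothesis ``Alice loses her position.'' Under the ownership definition in Section~\ref{general_properties} and the contract semantics of Section~\ref{transfer_leader}, this can happen only if a successful $\replaceLeader()$ call has taken effect on at least one of $AB$ or $BA$: any intermediate $\mutateLockLeader()$ state is reversible via $\revertLeader()$ (triggered directly, or piggybacked on $\refund()$, $\claim()$, or a fresh $\mutateLockLeader()$) and leaves Alice's rights intact. By construction, $\replaceLeader()$ checks the preimage of the stored $\replaceHashlock = H(C_1)$, so the transaction that causes Alice to lose her position must publish $C_1$ on-chain. Whether this call was made by Carol in the nominal path or by Bob finishing a half-done transfer in step~2 of the Replace/Revert Phase does not matter, since either route reveals $C_1$.

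Next I would use the synchronous model: once $C_1$ lies in a block, conforming Alice observes it within $\Delta$ and submits $\claim()$ to $CA$ with $C_1$ as the preimage of $CA.\swapHashlock = H(C_1)$. The only thing left to verify is that this claim lands before $CA$'s timeout $\startLeader + 9\Delta$; until then $CA$ cannot be refunded to Carol, so the preimage-bearing $\claim()$ succeeds and transfers $\Asset_C$ to Alice.

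The main obstacle, and where I would spend the bulk of the argument, is this timing bound. Because Alice is conforming, her $\mutateLockLeader()$ transactions are included on both $AB$ and $BA$ by time $\startLeader + 2\Delta$, so $\mutation.\startTime \le \startLeader + 2\Delta$ on each contract. The contract enforces $\text{block time} \le \mutation.\startTime + 6\Delta$ for any successful $\replaceLeader()$, so the call revealing $C_1$ is confirmed no later than $\startLeader + 8\Delta$; conforming Alice observes it by that moment, and her ensuing $\claim()$ lands within one additional $\Delta$ round, at block time at most $\startLeader + 9\Delta$, which is exactly the deadline. The $2\Delta$ contest window and $2\Delta$ of extra slack between Carol's and Bob's replacement deadlines (justified at the end of Section~\ref{transfer_leader} and illustrated in Figure~\ref{figure:execution}) are precisely what makes $\startLeader + 9\Delta$ a tight-but-sufficient choice for $CA$'s timeout, closing the argument.
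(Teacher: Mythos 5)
Your proof is correct and follows essentially the same route as the paper's: losing the position requires a successful $\replaceLeader()$ call that publishes $C_1$, the mutate-lock lemma bounds $\mutation.\startTime$ by $\startLeader+2\Delta$ so the reveal occurs by $\startLeader+8\Delta$, and Alice then has one $\Delta$ to claim on $CA$. You are somewhat more explicit than the paper about why a replace call is the only irreversible loss of position and about the case where Bob rather than Carol reveals $C_1$; note also that you work with the main text's $CA$ timeout of $\startLeader+9\Delta$ (making the bound exactly tight), whereas the paper's appendix proof states the timeout as $\startLeader+10\Delta$, an internal inconsistency in the paper rather than a flaw in your argument.
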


In addition to the properties from Section \ref{general_properties}, it is desirable for a leader transfer protocol to have one additional property.

Given two protocols $P,P'$ that satisfy \textit{non-blocking transfer with adversarial counterparty (Bob)}, Bob is called \textit{altruistic} in $P'$ with respect to $P$ if Bob conforming in $P'$ terminates faster than Bob conforming in $P$ in executions where all parties are conforming.
Importantly, it is not necessary that Bob should be incentivized to choose to follow $P'$ over $P$.

When  Alice and Carol are conforming, even though adversarial Bob cannot block the transfer, Bob can delay the transfer for a few rounds.
The following next property we introduce is a slightly stronger version of \textit{non-blocking} property that is ideal for a leader transfer protocol.

\begin{enumerate}
    \item \textit{Timely transfer with altruistic Bob}: A protocol $P'$ satisfies this property if there exists a protocol $P$ satisfying \textit{non-blocking transfer with adversarial counterparty (Bob)} where Bob is \textit{altruistic} in $P'$ with respect to $P$.
\end{enumerate}

\begin{theorem}
    Protocol \ref{transfer_leader_altruistic} satisfies timely transfer with altruistic Bob.
\end{theorem}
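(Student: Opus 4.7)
The plan is to verify the definition of \emph{timely transfer with altruistic Bob} directly, by exhibiting a companion protocol $P$ and then checking the two clauses: (i) both $P$ and Protocol~\ref{transfer_leader_altruistic} satisfy non-blocking transfer with adversarial Bob, and (ii) when all parties conform, Bob's execution of Protocol~\ref{transfer_leader_altruistic} terminates strictly earlier than his execution of $P$.

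The natural choice is $P = $ Protocol~\ref{transfer_leader}, which satisfies non-blocking transfer by Theorem~\ref{leader_nonblock_thm}. To inherit non-blocking for Protocol~\ref{transfer_leader_altruistic}, I would argue that the only additions relative to the base protocol are Bob's optional $\approveLeader()$ call in the Consistency Phase and a corresponding shortcut that lets Carol invoke $\replaceLeader()$ before the $2\Delta$ contest window closes whenever $approved$ is set. An adversarial Bob who never calls $\approveLeader()$ leaves $approved$ false on both contracts, so the function-enabledness conditions collapse back to those of Protocol~\ref{transfer_leader}; a conforming Alice can keep to the base-protocol timeline and a conforming Carol can simply wait out the contest window, producing an execution indistinguishable from one under $P$. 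Non-blocking then follows from Theorem~\ref{leader_nonblock_thm}.

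For the speedup under all-conforming executions, I would compare the wall-clock of the two protocols starting from $\mutation.\startTime$ on each contract. Under Protocol~\ref{transfer_leader}, even when no one contests, Carol is blocked from calling $\replaceLeader()$ until $\contest.\timeout = \mutation.\startTime + 2\Delta$ on both contracts, so completion of the replacement takes until roughly $\mutation.\startTime + 3\Delta$ once Carol's $\replaceLeader()$ calls propagate. Under Protocol~\ref{transfer_leader_altruistic}, altruistic Bob calls $\approveLeader()$ on both contracts within $\Delta$ after observing Alice's consistent mutations; as soon as Carol sees both $approved$ flags set, she may call $\replaceLeader()$, and within a further $\Delta$ the replacement is finalized. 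Hence the altruistic execution finishes by approximately $\mutation.\startTime + 2\Delta$, saving $\Delta$ over $P$. This is precisely the condition for Bob to be altruistic in Protocol~\ref{transfer_leader_altruistic} with respect to $P$, which completes the proof.

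The main obstacle I anticipate is not the timing arithmetic but the safety check that the $\approveLeader()$ shortcut does not open any new attack surface that would invalidate the non-blocking guarantee of Protocol~\ref{transfer_leader_altruistic}. Because calling $\approveLeader()$ permanently waives Bob's right to $\contestLeader()$ on that contract, I have to verify that the protocol only prescribes this call after Bob has already seen consistent, well-formed mutations on both $AB$ and $BA$, and that the $approved$ flag is local to the current mutation epoch so that a later malformed mutation (e.g., in a Protocol~\ref{sec:multi_candidate} setting) cannot inherit approval. Given these invariants, the argument that adversarial Bob can always revert to the base-protocol strategy, and hence that non-blocking is preserved, goes through cleanly; once that is established, the termination-time comparison is immediate from the timeout schedule spelled out in Section~\ref{transfer_leader}.
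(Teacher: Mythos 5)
Your proposal is correct and follows essentially the same route as the paper's proof: take $P$ to be Protocol~\ref{transfer_leader}, invoke Theorem~\ref{leader_nonblock_thm} for non-blocking, and show the all-conforming execution of Protocol~\ref{transfer_leader_altruistic} reaches the Replace/Revert Phase a full $\Delta$ earlier because $\approveLeader()$ lets Carol skip the $2\Delta$ contest window. Your extra step of explicitly checking that the altruistic variant itself remains non-blocking (by noting an adversarial Bob who withholds $\approveLeader()$ just reproduces the base protocol) is a point the paper leaves implicit, but it does not change the argument.
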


\begin{proof}
Since Protocol \ref{transfer_leader} is \textit{non-blocking} by Theorem \ref{leader_nonblock_thm},
it is enough to show that Bob is altruistic in Protocol \ref{transfer_leader_altruistic} with respect to Protocol \ref{transfer_leader}.
By assumption all parties are conforming.
In both protocols, Carol will have created $MutSwapCA$ by $startLeader + \Delta$ and Alice will have called $mutateLockLeader()$ on $AB$ and $BA$ by $startLeader + 2\Delta$.
In Protocol \ref{transfer_leader_altruistic} Bob now calls $approveLeader()$ on $MutSwapAB$ and $MutSwapBA$ by $startLeader + 3\Delta$.
Thus the Replace/Revert Phase for Protocol $\ref{transfer_leader_altruistic}$ begins because Bob skips the contest phase.
In Protocol \ref{transfer_leader}, since Alice is compliant, her signatures on $MutSwapAB$ and $MutSwapBA$ are consistent.
Thus Bob won't call $contest()$ on $MutSwapAB$ or $MutSwapBA$ during this $2\Delta$ Consistency Phase.
Carol waits $2\Delta$ after $startLeader + 2\Delta$.
In this case, the Replace/Revert Phase begins at $startLeader + 4\Delta$.
Thus Bob is altruistic in Protocol \ref{transfer_leader_altruistic} with respect to Protocol \ref{transfer_leader}.
\end{proof}

\subsection{Follower Transfer Properties}
\label{follower_properties}

We now demonstrate how our follower transfer protocol also satisfies the general transfer properties from Section \ref{general_properties}.

Again, the proofs are included in the Appendix \secref{proof}.

\begin{restatable}{theorem}{followerLivenessThm}
\label{follower_liveness_thm}
Protocol \ref{transfer_follower} satisfies liveness: If Alice, Bob, and David are all conforming, and Alice doesn't reveal $A_1$, then Bob gets David's principal, David gets Bob's position, and Alice maintains her position.
\end{restatable}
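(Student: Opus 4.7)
The plan is to trace the two-phase execution of Protocol \ref{transfer_follower} under the hypothesis that Alice, Bob, and David are all conforming and that Alice never publishes $A_1$. Because the follower protocol is short, liveness reduces to verifying, step by step, that each prescribed contract call is enabled when its conforming caller attempts it, that every timeout is respected, and that the resulting on-chain state realizes each of the three conclusions.

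First I would show that David's creation of the $DB$ contract takes effect by $\startFollower+\Delta$, since David acts unilaterally. Next, because Alice does not reveal $A_1$, the precondition for Bob's Mutate Phase step is satisfied, so by $\startFollower+2\Delta$ Bob submits $\mutateLockFollower()$ on $AB$ and $\mutateLockFreeFollower()$ on $BA$; I would check that neither contract has expired, that no preimage of $H(A_1)$ has been posted, and that both calls succeed, installing $\replaceHashlock=H(D_1)$ and candidate addresses pointing to David while leaving $\swapHashlock=H(A_1)$ untouched. Within $\Delta$ of observing both mutations, David submits $\replaceFollower(D_1)$ on $AB$ and $BA$, which finalizes the substitution of David for Bob as receiver on $AB$ and sender on $BA$ and publicly exposes $D_1$; this yields \emph{David gets Bob's position}. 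Bob then observes $D_1$ and calls $\claim()$ on $DB$ to withdraw $\Asset_D$ before the $DB$ timeout $\startFollower+5\Delta$, establishing \emph{Bob gets David's principal}.

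For the final conclusion, \emph{Alice maintains her position}, I would observe that none of the executed steps alters $\swapHashlock$ on $AB$ or $BA$, and that because $BA$ was mutated using $\mutateLockFreeFollower()$ rather than $\mutateLockFollower()$, the escrowed $\Asset_B$ is never frozen during the protocol. Hence Alice can still invoke $\claim()$ on $BA$ using $A_1$ at any time up to the original swap timeout $T$; the only change relative to the initial setup is that David has replaced Bob as receiver on $AB$, which does not affect Alice's side of the option.

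The main obstacle is the timing bookkeeping: the proof must stack the four successive $\Delta$-bounded steps (David's $DB$ creation, Bob's observation and mutation, David's observation and $\replaceFollower(D_1)$, and Bob's observation of $D_1$ followed by $\claim()$ on $DB$) and verify that they all fit within the $5\Delta$ window set by the $DB$ timeout, so that Bob's final claim is guaranteed to land. A secondary subtlety is ruling out interference from a passive-but-conforming Alice -- in particular, the argument should note that without $A_1$ Alice cannot drain $BA$ and that she initiates no concurrent leader-side mutation -- so that Bob's scheduled follower-side mutations are always enabled when he submits them.
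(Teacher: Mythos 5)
Your proposal is correct and follows essentially the same route as the paper: the paper factors the timeline argument into a lemma (Bob and David compliant, Alice silent, implies Bob gets $\Asset_D$ and David takes Bob's position) and then invokes it, with the same $\Delta$-by-$\Delta$ schedule ending at $\startFollower+4\Delta < \startFollower+5\Delta$. Your explicit treatment of the third conclusion -- that $\swapHashlock$ is untouched and $\mutateLockFreeFollower()$ never freezes $\Asset_B$, so Alice keeps her option -- is a point the paper's liveness proof leaves implicit and defers to its separate optionality-preservation theorem, so including it is a small improvement rather than a deviation.
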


\begin{restatable}{theorem}{followerTransferIndThm}
\label{follower_transfer_ind_thm}
Protocol \ref{transfer_follower} satisfies transfer independence: Bob can transfer his position to David without Alice's participation.
\end{restatable}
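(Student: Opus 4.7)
The plan is to walk through every step of Protocol~\ref{transfer_follower} and verify two things: each step is initiated by Bob or David (not Alice), and no contract call requires a signature, secret, or timely response from Alice, so her complete absence cannot cause the transfer to fail or to be reverted.

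First I would check the Mutate Phase. Step~(1) is David alone creating the $DB$ contract, escrowing $\Asset_D$ under hashlock $H(D_1)$ with timeout $\startFollower+5\Delta$; Alice is not involved. Step~(2) is Bob alone calling $\mutateLockFollower()$ on $AB$ and $\mutateLockFreeFollower()$ on $BA$. Because Bob is the receiver on $AB$ and the sender on $BA$ from the original setup, these mutation calls are authorized by his signature alone. The design choice worth emphasizing here is that $\mutateLockFreeFollower()$ on $BA$ deliberately does not freeze Alice's ability to claim with $A_1$, so there is no contractual action Alice must take to preserve her position, and correspondingly no participation of hers is needed for the mutation itself to be accepted by the contract.

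Next I would check the Replace/Revert Lock Phase. Step~(1) is David calling $\replaceFollower()$ with $D_1$ on both $AB$ and $BA$ using his own secret; Alice has no role. Step~(2) is Bob calling $\claim()$ on $DB$ with the now-revealed $D_1$; again Alice is not needed. After these steps David holds Bob's position in the swap and Bob has been compensated with $\Asset_D$. The only ``Alice-side'' gate anywhere in the protocol is the precondition in Phase~I step~(2) that Alice has not revealed $A_1$, which is a passive check Bob performs locally against public blockchain state, not a request for Alice's cooperation. I would also note that, unlike the leader transfer, there is no consistency/contest subphase here that could be triggered by Alice's silence, because the mutations never require cross-contract relaying of an Alice signature.

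The only thing to confirm carefully is at the contract level: that the authorization predicates inside $\mutateLockFollower$, $\mutateLockFreeFollower$, and $\replaceFollower$ check only Bob's or David's identity, and do not, for example, require a relayed Alice signature analogous to the consistency mechanism in Protocol~\ref{transfer_leader}. Since the follower transfer does not include an Alice-initiated step and $\mutateLockFreeFollower$ is specifically designed so that Alice's $BA$ claim path remains open, this reduces to a short inspection of the contract code in Appendix~\ref{sec:code}; I do not expect a genuine mathematical obstacle, only a bookkeeping verification that the happy path for Bob and David is self-contained.
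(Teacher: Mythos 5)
Your proposal is correct and takes essentially the same route as the paper: the paper factors the step-by-step trace (David creates $DB$, Bob calls $\mutateLockFollower()$/$\mutateLockFreeFollower()$, David reveals $D_1$ via $\replaceFollower()$, Bob claims on $DB$ before its $\startFollower+5\Delta$ timeout) into Lemma~\ref{follower_bob_david_compliant} under the hypothesis that Alice does nothing, and then derives transfer independence immediately. Your inlined walk-through of the two phases, checking that no contract call requires Alice's signature or timely action, is the same argument.
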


\begin{restatable}{theorem}{followerNonblockThm}
\label{follower_nonblock_thm}
Protocol \ref{transfer_follower} satisfies non-blocking transfer: Bob can transfer his position to David even if Alice is adversarial.
\end{restatable}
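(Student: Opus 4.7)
The plan is to verify that every action in Protocol \ref{transfer_follower} is a unilateral operation by either Bob or David, and that Alice---as the third counterparty---has no on-chain mechanism to prevent those operations from taking effect. I would enumerate the moves in sequence: David's creation of the $DB$ contract, Bob's calls to $\mutateLockFollower()$ on $AB$ and $\mutateLockFreeFollower()$ on $BA$, David's two $\replaceFollower()$ calls using $D_1$, and Bob's final $\claim()$ on $DB$. Each transaction carries Bob's or David's signature and is accepted by the contract based solely on its own stored state, not on any cooperation by Alice. Hence Alice cannot block, reorder, or veto any of these transactions from being included in the blockchain within time $\Delta$.

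The main obstacle is Alice's residual ability to exercise her option by posting $A_1$, since $\mutateLockFreeFollower()$ intentionally leaves the $BA$ asset unlocked to preserve her optionality. I plan to argue that such an exercise does not constitute ``interference'' in the sense of the property. If Alice reveals $A_1$ within the initial $\Delta$ window (before Bob's mutation), the precondition of Bob's Mutate step fails and the transfer is never initiated---the original swap simply completes, and there is no transfer to block. If Alice reveals $A_1$ after Bob's mutation but before David's replacement, the $AB$ asset is still frozen by $\mutateLockFollower()$, so David can call $\replaceFollower()$ unimpeded and become the new receiver on $AB$; David then claims $AB$ using the now-public $A_1$, while Bob claims $DB$ using the $D_1$ revealed by David. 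Economically, Bob still sells his position for $\Asset_D$ and David assumes the payoff of the original swap. If Alice reveals $A_1$ only after David's replacement completes, she is exercising her option against the new follower exactly as intended for a successful transfer.

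Finally, I would confirm that the $DB$ timeout $\startFollower + 5\Delta$ leaves enough budget so that Alice cannot force a timing failure through passive delay: the five rounds accommodate David's deposit, Bob's mutations, David's replacements, Bob's claim, and a buffer for block-inclusion latency, and none of these steps depend on an Alice-signed transaction. Since Alice cannot prevent or delay any of Bob's or David's moves on her own, all conforming transactions succeed within their respective deadlines. Combining these observations yields Protocol \ref{transfer_follower}'s non-blocking property with respect to an adversarial Alice.
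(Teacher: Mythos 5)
Your proposal is correct and follows essentially the same route as the paper's proof: observe that every step of the transfer is a unilateral transaction by Bob or David, check the timeout budget, and note that Alice's only power is revealing $A_1$, in which case David---having assumed Bob's position---can still claim on $AB$ (whose timeout $T+\Delta$ exceeds $BA$'s). Your finer three-way case split on when $A_1$ is revealed is a slightly more careful presentation of the same argument the paper compresses into its final two sentences.
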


Note that Alice is not adversarial by choosing to reveal her secret $A_1$ since this just means she is exercising her option.
This is consistent with how she should behave in the original swap protocol.

\begin{restatable}{theorem}{followerTransferAtomThm}
\label{follower_transfer_atom_thm}
Protocol \ref{transfer_follower} satisfies transfer atomicity: If Bob loses his swap position, then he can claim David's principal.
\end{restatable}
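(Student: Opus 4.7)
The plan is to reduce the hypothesis ``Bob loses his swap position'' to the event ``$D_1$ is published on-chain,'' and then verify that conforming Bob has enough time to use $D_1$ to call $\claim()$ on $DB$ before its timeout at $\startFollower + 5\Delta$. Bob only loses his position when $\replaceFollower()$ succeeds on at least one of the $AB$ or $BA$ contracts, because it is these calls that install David as the new receiver on $AB$ or sender on $BA$. During the Mutate Phase, Bob himself set $\replaceHashlock = H(D_1)$ on both contracts, and only David knows $D_1$, so any successful $\replaceFollower()$ call must submit $D_1$ as input and thereby make $D_1$ publicly visible on the blockchain hosting that contract.

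Next I would pin down the worst-case schedule. David escrows $\Asset_D$ at time $\startFollower$, Bob posts his $\mutateLockFollower()$ and $\mutateLockFreeFollower()$ calls by $\startFollower + \Delta$, and any $\replaceFollower()$ that takes effect must do so before the replace timeout on $AB$ or $BA$. By the synchrony assumption, once $D_1$ appears on-chain, conforming Bob observes it within $\Delta$ and can submit $\claim(D_1)$ on $DB$, which is then included within another $\Delta$; I would verify that the $5\Delta$ lifetime of $DB$ is precisely sized so that this whole chain of events fits, in the worst case, strictly before $\startFollower + 5\Delta$. The case in which David invokes $\replaceFollower()$ on only one of $AB, BA$ is handled identically: partial replacement still removes Bob from his position, and it still forces $D_1$ on-chain, which is all the rest of the argument needs.

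The main obstacle is making this worst-case timing airtight. Adversarial David may delay his $\replaceFollower()$ until the very last permitted moment, so I need to show that the gap between that moment and $\startFollower + 5\Delta$ strictly exceeds Bob's $\Delta$ observation delay plus the single-$\Delta$ delay for his $\claim$ transaction to be included on the chain hosting $DB$. Once this scheduling check is in place, the conclusion follows immediately from the fact that $D_1$ is now publicly available and $DB$ is still live, so Bob's call to $\claim()$ on $DB$ must succeed.
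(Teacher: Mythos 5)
Your reduction is exactly the paper's: Bob can only be displaced by a successful $\replaceFollower()$ call, any such call must reveal $D_1$ on-chain (since Bob installed $\replaceHashlock=H(D_1)$ during the Mutate Phase), and the conclusion then rests on showing the $5\Delta$ lifetime of $DB$ leaves Bob room to forward $D_1$. However, the ``scheduling check'' you defer is precisely where your stated numbers go wrong, in two compensating-but-incorrect ways. First, the protocol has David create $DB$ by $\startFollower+\Delta$ and Bob issue $\mutateLockFollower()$/$\mutateLockFreeFollower()$ only in the following step, i.e.\ by $\startFollower+2\Delta$, not by $\startFollower+\Delta$ as you wrote; since the $\replaceFollower()$ window is $2\Delta$ from the mutation start time, the latest $D_1$ can appear is $\startFollower+4\Delta$. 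Second, you budget $2\Delta$ for Bob's reaction (one $\Delta$ to observe $D_1$, another for his $\claim$ to be included); combined with the correct $\startFollower+4\Delta$ bound this yields $\startFollower+6\Delta$, which overshoots $DB$'s timeout. The paper's model charges a single $\Delta$ for ``see an on-chain event and have your responding transaction included'' (this is how every other step in the paper is accounted), so the correct tally is $\startFollower+4\Delta$ for the latest reveal plus one $\Delta$ for Bob's claim, landing exactly at $\startFollower+5\Delta$; this meets the deadline only because the claim guard is $now\le T$ rather than a strict inequality, so your hope of finishing ``strictly before'' the timeout is also not achievable --- the argument closes with zero slack. With those two corrections the proof is the paper's proof; as written, the arithmetic you promise to verify does not actually verify.
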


On top of the base properties described in Section \ref{general_properties}, any protocol transferring  Bob's position should also be:

\begin{enumerate}
    \item \emph{Optionality preserving}: If Alice is compliant, she never loses her ability to exercise the original swap option.
\end{enumerate}

Namely, it is unfair for Alice to temporarily lose her right to use her option without her consent.

\begin{theorem}
Protocol \ref{transfer_follower} is optionality preserving.
\end{theorem}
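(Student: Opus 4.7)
The plan is to reduce Alice's ability to exercise her option to the single concrete event that she can successfully call $\claim(A_1)$ on the $BA$ contract at any moment before its timeout $T$, and then to verify this remains possible throughout every reachable state of Protocol \ref{transfer_follower}. Recall that $BA$ holds Bob's escrowed $\Asset_B$, that Alice is its receiver, and that $H(A_1)$ is its $\swapHashlock$; to exercise the option is precisely to redeem $\Asset_B$ under this hashlock before time $T$.

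First I would examine the state that the Mutate Phase writes on $BA$. Bob's call to $\mutateLockFreeFollower()$ modifies only $BA.\replaceHashlock$ and $BA.\candidateSender$; by definition of the ``lock-free'' variant, it does not freeze $BA$'s escrow. Crucially, it does not touch $BA.\swapHashlock$, $BA.\receiver$, or the $BA$ timeout. The companion call to $\mutateLockFollower()$ is made on $AB$, and governs the ability to redeem $\Asset_A$, which is Bob's (or David's) side of the trade, not Alice's; any lock on $AB$ is therefore irrelevant to Alice's exercise.

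Next I would handle the Replace/Revert Phase. A successful $\replaceFollower()$ on $BA$ only promotes $\candidateSender$ to $\sender$, replacing Bob by David in the follower role; it again does not rewrite $BA.\swapHashlock$, $BA.\receiver$, or the timeout, and it does not retroactively freeze the escrow. The revert path symmetrically only restores the prior sender. Putting these together, the invariants $BA.\swapHashlock = H(A_1)$, $BA.\receiver = \Alice$, $\Asset_B$ unfrozen, and timeout $= T$ all hold in every reachable state of the protocol. Combined with the standard correctness of $\claim$, this shows that Alice's invocation $\claim(A_1)$ on $BA$ succeeds at any $t < T$, which is exactly the statement that her optionality is preserved.

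The main obstacle is conceptual rather than technical: one must be careful not to conflate the lock that $\mutateLockFollower()$ places on $AB$ with any restriction on Alice, since Alice's exercise is witnessed entirely by a single action on $BA$. Once this reduction is made explicit, and the two $BA$-side operations of the follower protocol are read off as explicitly lock-free and field-preserving respectively, the argument collapses to inspecting the contract definitions. A minor point worth flagging is that Alice revealing $A_1$ mid-transfer is \emph{not} a deviation from compliance; it is precisely the protected behavior the theorem asserts remains available to her at all times.
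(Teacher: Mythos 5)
Your proposal is correct and follows essentially the same route as the paper: both arguments reduce optionality preservation to the observation that $\claim()$ on $BA$ is independent of the follower-mutation state, and that the follower-transfer functions touch only that state (never the swap hashlock, receiver, timeout, or the lock on $BA$'s escrow). Your version is just a more explicit, field-by-field reading of the same inspection of the contract code.
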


\begin{proof}
It's enough to show that Alice can always claim the principal on $BA$, if she reveals her secret $A_1$ by timeout $T$.
The added functions from Protocol \ref{transfer_follower}, namely $mutateLockFreeFollower()$,$replaceFollower()$, and $revertFollower()$ only modify the state of $follower\_mutation$.
However, any call to $claim()$ on $BA$ doesn't depend on the state of $follower\_mutation$.
Namely, any call to $claim()$ is independent of the state of the follower transfer protocol. 
So its enough that Alice reveal $A_1$ before $T$.
\end{proof}

There is nothing preventing both Protocol \ref{transfer_leader} and Protocol \ref{transfer_follower} to be run concurrently.
That is, if Alice wants to transfer her position to Carol, and Bob wants to transfer his position to David, they can both do so simultaneously.
This follows immediately from the fact that the parts of the contracts managing the state for each individual protocol are entirely disjoint and cannot affect each other.

\subsection{Leader Transfer (Multiple Buyers) Properties}

Besides the general properties mentioned above, the protocol that handles multiple candidate buyers (Protocol \ref{multi_transfer_leader}) satisfies the following unique properties.

\begin{itemize}
    \item First-come first-serve (FCFS): If Alice is conforming (the buyer who tentatively pays to Alice earlier gets a smaller sequence number), then the earlier arriving buyer has priority to replace Alice's position.
    \item Starvation freedom: If Alice is conforming and a conforming buyer, say $Carol_j$, gets a sequence number $j$, then  $Carol_j$ can replace Alice's position if she is conforming and all $Carol_i$ where $i<j$ gives up the replacement.
\end{itemize}

\begin{theorem}
\label{thm:counter_sync}
If Alice is conforming, the counter on both contracts are synchronized with inconsistency for at most $\Delta$.
\end{theorem}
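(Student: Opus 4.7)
The plan is to argue by induction on the counter-incrementing events: since the counter starts at $0$ on both $AB$ and $BA$ and only changes via a successful \textit{revertLeader()} call, it suffices to show that every such event on one contract is matched by a corresponding event on the other contract within $\Delta$. Once this is established, at any moment in time at most one revert has occurred on a single side without being mirrored, so the two counters differ by at most one and only during a window of length at most $\Delta$.

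First I would enumerate how the counter can change. The counter is modified only as a side effect of \textit{revertLeader()}, which is called either explicitly, or implicitly from \textit{mutateLockLeader()}, \textit{refund()}, or \textit{claim()} after the tentative mutation has expired. Each such revert is conditioned on the local state of \textit{mutation.start\_time} and the currently active $Carol_i$. So to relate the counter values on $AB$ and $BA$, I need to relate the lifecycle of each tentative mutation across the two contracts.

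Next I would use Alice's conformance to align the mutation lifecycles. Because Alice is conforming, whenever she issues \textit{mutateLockLeader()} for a sequence number \textit{seq}, she broadcasts to both contracts concurrently, so the two calls land within $\Delta$ of each other by the synchrony assumption. Consequently \textit{mutation.start\_time} on $AB$ and on $BA$ differ by at most $\Delta$. The deadline that triggers the implicit \textit{revertLeader()} (namely $mutation.start\_time + 6\Delta$) is therefore also staggered by at most $\Delta$ between the two contracts, and an explicit \textit{revertLeader()} from any party propagates to the other contract within $\Delta$ as well. The key invariant I would maintain is: immediately after a revert fires on one contract at time $t$, the matching revert on the other contract fires by time $t + \Delta$; hence the counters agree again within $\Delta$, and in the meantime they differ by exactly one.

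The main obstacle I anticipate is handling the asymmetric triggers, since \textit{revertLeader()} may be invoked on one contract by Bob, Carol, or Alice via different wrappers (\textit{mutateLockLeader()} with the next \textit{seq}, \textit{refund()}, or \textit{claim()}) while the other contract still holds the previous tentative mutation. I would argue that whoever fires the first revert has, by blockchain publicity and synchrony, revealed the triggering preimage or timeout condition on-chain, and since the same triggering condition is available on the sibling contract within $\Delta$ (because Alice's conformance guaranteed symmetric \textit{mutation.start\_time}s), the sibling contract's revert must also fire within $\Delta$, either implicitly through the timeout or explicitly through any conforming party's follow-up call. This closes the inductive step and yields the claimed bound.
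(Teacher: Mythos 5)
Your proposal is correct and follows essentially the same route as the paper's proof: both arguments rest on the observations that the counter changes only via \textit{revertLeader()}, that Alice's conformance keeps the \textit{mutateLockLeader()} start times on $AB$ and $BA$ within $\Delta$ of each other, and that the $6\Delta$ revert deadlines (and hence the increments) are therefore staggered by at most $\Delta$. Your version merely spells out the induction and the enumeration of implicit revert triggers that the paper leaves implicit.
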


\begin{proof}
The counter is initialized as 0 and it incremented only after  \textit{revertLeader()} is enabled. The \textit{revertLeader()} is enabled only when $6\Delta$ elapses after the start time of  \textit{mutateLockLeader()} being called. Since the start time of \textit{mutateLock()} on both contracts are staggered by at most $\Delta$, when the counter is incremented on one contract, it can be incremented on another contract within $\Delta$.
\end{proof}

\begin{theorem}
If Alice is conforming, Protocol \ref{multi_transfer_leader} satisfies FCFS.
\end{theorem}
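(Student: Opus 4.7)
The plan is to reduce FCFS to two facts: (i) the shared \textit{counter} serves as a global clock over candidate buyers that is nearly synchronized across both contracts, and (ii) a candidate $Carol_i$ with sequence number $i$ can only begin their tentative replacement when $counter = i$ on both contracts. Combined, these ensure that the order in which buyers are allowed to execute their copy of Protocol \ref{transfer_leader} matches the order of sequence numbers, which (since Alice is conforming) matches the arrival order of buyers.

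First, I would invoke \thmref{counter_sync} to conclude that $AB.counter$ and $BA.counter$ agree up to a lag of at most $\Delta$. Then I would argue by induction on the counter value that only the buyer holding the current sequence number can make progress: a call to $mutateLockLeader(\cdot, seq)$ is rejected unless $seq = counter$ by the protocol definition, so any $Carol_j$ with $j > counter$ is blocked at both $AB$ and $BA$ until the counter advances. The counter advances by $1$ only after $revertLeader()$ has been enabled, i.e.\ after $6\Delta$ has elapsed from the start of the current tentative mutation and the tentative transfer has been reverted; in particular, the counter cannot skip past a candidate. If $Carol_i$ successfully completes Protocol \ref{transfer_leader} and replaces Alice, then by the design of \replaceLeader() the tentative change becomes permanent and no \revertLeader() occurs, so the counter never reaches any $j > i$ and no later buyer is ever admitted, which is exactly the FCFS guarantee on the winning slot.

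Next I would handle the boundary between consecutive candidates. Because mutation transactions with $seq = counter + 1$ may be accepted after $4\Delta$ has elapsed (and stored pending completion of the current attempt), I would check that such pre-queued transactions only \emph{take effect} after \revertLeader() fires for the current slot, so that $Carol_{i+1}$'s \replaceLeader() window is strictly later than $Carol_i$'s. Here the $\Delta$ gap between the two contracts' counters is the subtle point: I would argue that because the minimum interval between successive mutation starts is $4\Delta$ and the contest/replace windows on a slot are closed within the preceding $4\Delta$ (by the timing analysis in \secref{two_party_swap_transfer}), no two candidates ever have simultaneously active replace windows on the same contract, and the across-contract $\Delta$ lag is absorbed by the same slack that already justified the $2\Delta$ buffers for Bob and Carol.

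The main obstacle I anticipate is the second paragraph's bookkeeping around the pre-queued mutations with $seq > counter$: one needs to rule out the scenario in which $Carol_{i+1}$'s pending mutation becomes active on one contract before $Carol_i$'s revert has propagated to the other contract, which could momentarily let $Carol_{i+1}$ act before $Carol_i$'s slot is fully closed. The argument here is that \replaceLeader() for $Carol_{i+1}$ checks the counter at call time, and by \thmref{counter_sync} the lagging contract's counter catches up within $\Delta$, well inside the $2\Delta$ replace window; so although $Carol_{i+1}$'s mutation may be \emph{recorded} earlier on one side, it cannot be \emph{used} until both counters have advanced past $Carol_i$'s slot, preserving the FCFS ordering of effective replacements.
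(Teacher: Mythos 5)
Your proposal is correct and follows essentially the same route as the paper's own proof: both arguments rest on Theorem~\nakedthmref{counter_sync} (the counters on $AB$ and $BA$ agree up to a $\Delta$ lag) together with the observation that the counter only advances when the current candidate's tentative mutation is reverted, so no later candidate can act until the earlier one gives up. Your second and third paragraphs are more careful than the paper about the pre-queued mutations with $seq = counter+1$, but this is added detail on the same argument rather than a different approach.
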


\begin{proof}
\label{proof:fcfs}
If Alice is conforming, she issues \textit{mutateLockLeader()} transactions with ascending sequence numbers to different potential buyers, in an arrive-earlier-smaller-sequence manner. For every potential buyer, a base transfer protocol (Protocol  \ref{transfer_leader}) is run as normal.  Without loss of generality, at time $t$, suppose on AB contract the counter is $i$ and on BA the counter is $j=i-1$. On AB, $Carol_i$ has priority. By Theorem \ref{thm:counter_sync}, we know the counter on $BA$ will become $i$ within $\Delta$ and then $Carol_i$ has priority. The reason why the counter is incremented from $j$ to $i$ is that $Carol_j$ gives up the replacement. Then we see on both contracts $Carol_i$ has priority. It is obvious $Carol_k$ where $k\geq i+1$ cannot start their replacement unless $Carol_i$ gives up, i.e. the replacement phase ends and the mutation is reverted, since the counter is $i$ now.  Therefore the protocol satisfies FCFS.
\end{proof}

\begin{theorem}
If Alice is conforming, Protocol \ref{multi_transfer_leader} is starvation free.
\end{theorem}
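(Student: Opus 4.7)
The plan is to reduce starvation freedom to three already-proven ingredients: FCFS (Theorem~\ref{proof:fcfs}), counter synchronization within $\Delta$ (Theorem~\ref{thm:counter_sync}), and the non-blocking/liveness guarantees of the base protocol (Theorems~\ref{leader_liveness_thm} and~\ref{leader_nonblock_thm}). The high-level strategy is to show, by induction on the sequence index, that the shared counter eventually reaches $j$ on both contracts, and then to invoke the base protocol's guarantees on the sub-execution belonging to $Carol_j$.

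First I would carry out the inductive argument. The base case $counter = 0$ holds by initialization. For the inductive step, assume the counter equals $i < j$ on both contracts at some time. By FCFS, $Carol_i$ is the unique buyer who may call $\replaceLeader()$ while the counter is $i$. Since $Carol_i$ gives up by hypothesis, $6\Delta$ after the start of $Carol_i$'s mutation the guard for $\revertLeader()$ becomes true on both contracts. A conforming Alice will then submit the next mutation transaction, carrying $seq = i+1$, on behalf of $Carol_{i+1}$; the protocol specifies that such a call implicitly triggers $\revertLeader()$, which increments the counter to $i+1$. Theorem~\ref{thm:counter_sync} then guarantees both contracts agree on $i+1$ within $\Delta$.

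Once the counter equals $j$ on both contracts, I would argue that the state block $state_j$ dedicated to $Carol_j$ evolves exactly as in a standalone execution of Protocol~\ref{transfer_leader}: only transactions carrying $seq = j$ touch $state_j$, and by FCFS no later candidate $Carol_k$ with $k > j$ can preempt her. Applying Theorem~\ref{leader_nonblock_thm} (or Theorem~\ref{leader_liveness_thm} in the nicer case where Bob is conforming) to this sub-execution yields that $Carol_j$, being conforming, completes $\mutateLockLeader()$ followed by $\replaceLeader()$ on both contracts, which is exactly what ``$Carol_j$ can replace Alice's position'' means.

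The main obstacle I expect is not the logical structure but the real-time bookkeeping: the accumulated delays---up to $6\Delta$ per give-up, the $4\Delta$ spacing the protocol imposes between consecutive mutation attempts, plus $6\Delta$ for $Carol_j$'s own base run---must all fit before the swap deadline $T$. The cleanest way to handle this is to read the theorem as a conditional guarantee, separating the structural claim (induction plus reduction to the base protocol) from the timing accounting, which can then be bounded linearly in $j$ and compared against $T - \startLeader$. This framing keeps the argument modular and isolates the only genuinely new work---establishing that the counter mechanism faithfully serializes access to the mutation slot---from appeals to already-proven properties of the base protocol.
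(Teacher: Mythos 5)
Your proposal is correct and takes essentially the same approach as the paper: the paper's own proof is a terse appeal to the FCFS argument and the counter-synchronization theorem, observing that once every $Carol_i$ with $i<j$ gives up, the counter is incremented to $j$ and $Carol_j$ then runs the base protocol to call $\replaceLeader()$ --- your induction on the counter value plus the reduction to the base protocol's liveness/non-blocking theorems just makes that explicit. The timing caveat you flag (the accumulated $6\Delta$ reverts and $4\Delta$ spacings must fit before the swap deadline $T$) is genuine and is not addressed in the paper's proof either, so reading the theorem as a conditional guarantee is the right call.
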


\begin{proof}
Similar to Proof \ref{proof:fcfs},  if Alice is conforming and a buyer  $Carol_j$, gets a sequence number $j$, then the base protocol for $Carol_j$ can start underlying others' base protocols. If all $Carol_i$ where $i<j$ gives up the replacement, then the counter is incremented to $j$ and then $Carol_j$ can call \textit{replaceLeader()} to replace Alice.
\end{proof}

\section{Related Work}
\seclabel{related}
Cross-chain options have their origin in cross-chain atomic swap protocols.
A cross-chain atomic swap enables two parties to exchange assets across different blockchains.
An atomic swap is implemented via hashed timelock contracts(HTLC)\cite{tiernolan}.
There are a variety of protocols proposed \cite{decred2,atomicdex,sparkswap,altcoin} and implemented \cite{bip199s,decredgit}.
Herlihy \emph{et al.} proposed protocols for atomic multi-party swaps \cite{Herlihy2018}
and more general atomic cross-chain deals \cite{HerlihyLS2019}. 

Several researchers~\cite{han2019optionality,arwen2019,liu2018,ZmnSCPxj} have noted
that most two-party swap protocols effectively act as poorly-designed
options~\cite{higham2004introduction},
because one party has the power to decide whether to
go through with the agreed-upon swap without compensation for its counterparty.

A number of proposals~\cite{han2019optionality,arwen2019,liu2018,ZmnSCPxj,eizinger,komodo,XueH2021}
address the problem of optionality in cross-chain atomic swaps by introducing
some form of premium payment,
where a party that chooses not to complete the swap pays a premium to the counterparty.
Robinson~\cite{danrobinson} proposes to reduce the influence of optionality
by splitting each swap into a sequence of very small swaps. 
Han \emph{et al.} \cite{han2019optionality} quantified optionality unfairness in atomic swap using the Cox-Ross-Rubinstein option pricing model \cite{cox1979option}, treating the atomic swap as an American-style option. The Black-Scholes (BS) Model \cite{black2019pricing} can be used to estimate the value of European-style options. 

Liu~\cite{liu2018} proposed an alternative approach where option providers are
are paid up-front for providing optionality,
as in the conventional options market.
In this protocol,
Alice explicitly purchases an option from Bob by paying him an nonrefundable premium.
Tefagh \emph{et al.}~\cite{tefaghcapital} proposed a similar protocol
which enables Alice to deposit her principal later than Bob.
None of these works have considered how to close an option owner's position
by transferring that option to a third party.

There are protocols that allow blockchains to communicate (cross-chain proofs), however they either rely on external third parties ~\cite{goes2020interblockchain} or their applicability requires the introduction of centralized services, modifications to existing software, and doesn't guarantee reliable message delivery ~\cite{robinson2021survey}.

\section{Remarks and Conclusions}
\label{sec:conclusions}
The transferable swap protocols presented here have certain limitations.
As described earlier,
the protocols require multiple time-out periods.
If all parties are responsive,
these timeouts should not affect the performance of normal executions,
but they could lead to long worst-case executions.
The protocols also include a ``hard timeout''
where Alice and Carol pause to give Bob a chance to
object to a malformed transfer.
Bob can be paid an incentive to respond quickly,
but he could slow down (but not stop) the protocols execution.

The ability for Bob to be able to report inconsistent state changes between multiple blockchains is integral to the design of our protocols.
Adding such functionality was not necessary in simpler cross-chain protocols like the two-party swap.
In future work, we hope to better understand how the complexity of a cross-chain deal relates to the necessity of this consistency phase.


\bibliographystyle{ACM-Reference-Format}
\bibliography{references,bibliography}
\clearpage
\newpage
\appendix
\section*{Appendix}

\subsection{Contracts}\label{sec:code}

\begin{lstlisting}[language=Solidity]

//Assume the following:

// now is the time the transaction is included in a block 
// A clear() function on the Mutation struct that resets all fields to default values
// A hash function H
// || denotes concatenation of inputs to a hash function
// A sig (digital signature) object with the following functions
//  valid(address) -> bool, returns true if valid signature by address 
//  msg() -> [hash], returns an array of hashlocks if the signature signed such a message, null otherwise 

contract MutSwapAB{
   
    struct FollowerMutation{
        //Signature by follower to allow candidate to take its position
        Sig voucher;
        //Candidate party to replace follower
        address candidate_receiver;
        //Hashlock used to replace follower
        uint replace_hash_lock;
        //Time mutation begins
        uint start_time;
        //Flag for freezing asset when a tentative replacement is happening
        bool mutating;
        //Controls whether asset is locked or not during mutation
        bool can_lock_asset;
    }

    struct LeaderMutation{
        //Signature by follower to allow candidate to take its position
        Sig voucher;
        //Candidate party to replace follower
        address candidate_sender;
        //Party who called mutateLockLeader
        address mutator;
        //Hashlock used to replace follower
        uint replace_hash_lock;
        //Hashlock for exercising option
        uint swap_hash_lock;
        //Time mutation begins
        uint start_time;
        //Used for optimistic execution of protocol
        bool approved;
        //Flag for freezing asset when a tentative replacement is happening
        bool mutating;
        //Controls whether asset is locked or not during mutation
        bool can_lock_asset;
    }
   

    //State information for base swap protocol
    Asset asset; //Reference to preferred token contract
    address sender; //Current sender of escrowed funds
    address receiver; //Current receiver of escrowed funds
    address leader; //Leader of the original swap protocol
    address follower; //Follower of the original swap protocol
    uint swap_hash_lock; //Hashlock of the swap protocol 
    uint T_AB; //Timeout for locked asset
    uint delta = 10 minutes; //Assumed worst case transaction inclusion time (Is arbitrary)

    //State info associated with mutable leader position
    LeaderMutation leader_mutation;

    //State info associated with mutable follower position
    FollowerMutation follower_mutation; 

    //Controls whether follower/leader positions can be changed
    bool mutable;
   
    function MutSwapAB(Asset _asset,uint start,uint dT,address _sender,address _receiver,uint _swap_hash_lock,address _leader,address _follower,bool _mutable){
        require(msg.sender == _sender); //Sender can only escrow their own funds
        require(leader != follower); //Leader and follower should be distinct
        require(leader == _sender);
        require(follower == _receiver);

        this.mutable = _mutable;

        //Inital mutation states
        leader_mutation.mutating = false; 
        leader_mutation.approved = false;
        follower_mutation.mutating = false; 

        sender = _sender;
        receiver = _receiver;
        swap_hash_lock = _swap_hash_lock; //Hashlock for the initial swap
        leader = _leader;
        follower = _follower;
        asset = _asset;
        asset.send(address(this));
        T_AB = start + (dT+1)*delta; //Sender is Alice

        if(mutable){
            //For AB contract, asset is actively locked during leader and follower transfer
            leader_mutation.can_lock_asset = true;
            follower_mutation.can_lock_asset = true;
        }
    }
   
    function claim(string secret){

        //If a previous mutation lock was never completed, revert to original swap
        if(mutable){
            if(leader_mutation.mutating){
                revertLeader();
            }
            if(follower_mutation.mutating){
                revertFollower();
            }
        }

        require(msg.sender == receiver.id); //Only receiver can call claim
        require(now <= T_AB); //Must be before timeout
        require(H(secret) == swap_hash_lock); //Claim conditional on revealing secret
        asset.send(receiver);
    }
   
    function refund(){
        
        //If a previous mutation lock was never completed, revert to original swap
        if(mutable){
            if(leader_mutation.mutating){
                revertLeader();
            }
            if(follower_mutation.mutating){
                revertFollower();
            }
        }

        require(msg.sender == sender); //Only sender can call refund
        require(now > T_AB); //After lock has timed out
        asset.send(sender);
    }

    function mutateLockLeader(Sig sig,address _candidate_sender, uint _replace_hash_lock,uint _swap_hash_lock){
        require(mutable);
        
        //If a previous mutation lock is stale, then call revert to allow for a new mutation lock to be made
        if(leader_mutation.mutating){
            revertLeader();
        }
        
        require(!leader_mutation.mutating); //Only one mutate_lock at a time 
        require(msg.sender == sender || msg.sender == receiver); //Only sender or receiver can mutate
        if(msg.sender == leader){ //Alice has less time to call mutateLock
            require(T_AB >= now + 8*delta);
        }else if(msg.sender == follower){
            //Bob is given more time to call mutateLock in response to Alice
            require(T_AB >= now + 7*delta);
        }
        
        require(sig.valid(leader)); //Requres Alice's sig of Carol's hashlocks
        require(sig.msg() == [_replace_hash_lock,_swap_hash_lock]); //The msg has to just be the candidate's hashlocks
        leader_mutating.mutating = true;
        leader_mutating.voucher = sig; //Proof that candidate was approved  
        leader_mutating.candidate_sender = _candidate; 
        leader_mutating.mutator = msg.sender;
        leader_mutating.replace_hash_lock = _replace_hash_lock;
        leader_mutating.swap_hash_lock = _swap_hash_lock;
        leader_mutating.start_time = now; //Used for flexible timeouts in the transfer
    }
   
    function mutateLockFollower(Sig sig,address _candidate_receiver, uint _replace_hash_lock){
        require(mutable);

        //If a previous mutation lock is stale, then call revert to allow for a new mutation lock to be made
        if(follower_mutation.mutating){
            revertFollower();
        }

        require(msg.sender == follower); //Only Bob can call
        require(!follower_mutation.mutating);
        require(follower_mutation.lock_asset); //Only can call mutateLockFollower on AB contract
        require(sig.valid(follower)); //Bob's valid signature
        require(sig.msg() == [_replace_hash_lock]);
        require(T_AB >= now + 3*delta); //Need enough time for David to call claim        

        follower_mutation.mutating = true;
        follower_mutation.voucher = sig; //Proof that candidate was approved by previous owner of the position
        follower_mutation.candidate_receiver = _candidate_receiver;
        follower_mutation.replace_hash_lock = _replace_hash_lock;
        follower_mutation.start_time = now; //Used for flexible timeouts in the transfer
    }

     function replaceLeader(string secret){
        require(mutable);
        require(leader_mutation.mutating);
        bool candidate_round = false;
        bool follower_round = false;
        
        //If Bob mutated, can skip contesting phases since he is implicitly approving of the signature he called mutateLock with
        if(leader_mutation.mutator == follower || (leader_mutation.mutator == leader && leader_mutation.approved)){ //Can immediately call replace once Bob has mutated 
            //Carol can replace once Bob has called mutate 
            candidate_round = (msg.sender == leader_mutation.candidate_sender) && (4*delta >= now - leader_mutation.start_time > 0);
            //Bob can call replace 
            follower_round = (msg.sender == follower) && (6*delta >= now - leader_mutation.start_time > 0);

        }
        else{ //Contesting (pessimistic case)
            //Carol can replace in round right after last chance at contesting
            candidate_round = (msg.sender == leader_mutation.candidate_sender) && (4*delta >= now - leader_mutation.start_time > 2*delta);
            //Bob can replace in the 3 rounds right after last chance at mutating round
            follower_round = (msg.sender == follower) && (6*delta >= now - leader_mutation.start_time>0);
        }
        
    
        require(candidate_round || follower_round);
        require(H(secret) == leader_mutation.replace_hash_lock);
       
        sender = leader_mutation.candidate_sender; //Carol takes refund optionality from Alice
        swap_hash_lock = mutation.swap_hash_lock;
       
        leader_mutation.approved = false;
        leader_mutation.mutating = false;
        leader_mutation.mutation.clear();
    }
   
    //If no contesting has occurred, transfer can be complete
    function replaceFollower(string secret){
        require(mutable);
        require(follower_mutation.mutating);
        require(2*delta >= now - follower_mutation.start_time > 0); //2 rounds given for replacement
        
        require(H(secret) == follower_mutation.replace_hash_lock);

        receiever = follower_mutatation.candidate_receiver; //David takes refund optionality from Bob

        follower_mutatation.clear();
        follower_mutation.mutating = false;
    }

     //Contest if transferring party tries to transfer to two different parties simultaneously
    //Note this method only does something if it can be proved that Alice lied
    //This method can't do anything if Alice is honest
    function contestLeader(Sig sig,string secret){
        require(mutable);
        require(msg.sender == follower); //Only allow Bob to call this
        require(leader_mutation.mutating); //Can only contest a mutation if one has occured
        require(!leader_mutation.approved); //Cannot contest after approval 
        require(leader_mutation.mutator == leader); //Can only contest a mutation when Alice called it, not Bob
        require(2*delta >= now - leader_mutation.start_time > 0); //Can contest only in the 2 rounds after mutation
        //Can contest if Alice creates two inconsistent signatures or tries to reveal preimage of hashlock too early
        require(sig.valid(leader) || H(secret) == swap_hash_lock); //Make sure Alice actually signed the sig
        if(sig != leader_mutation.voucher || H(secret) == swap_hash_lock){ //Checks if Alice reported inconsistent sigs
            leader_mutation.clear();
            leader_mutation.mutating = false;
        }
       
    }
   
    //Issue is really in the sequential consistent case
    //Bob can call approve in any next two rounds after a mutation
    function approveLeader(){
       require(mutable);
       require(msg.sender == follower); //Only Bob can approve 
       require(leader_mutation.mutating); //Can only approve after a mutation has begun 
       require(leader_mutation.mutator == leader); //Can only call approve when Alice was the one who called mutate 
       require(2*delta >= now - leader_mutation.start_time > 0); //Can approve only in the 2 rounds after mutation
       leader_mutation.approved = true;
    }
   
   
    //If replacement doesn't happen quickly enough, can revert back to the old swap
    //Isn't called directly in order to save an extra round of the protocol
    function revertLeader(){
        require(mutable);
        require(leader_mutation.mutating);
        require(now - leader_mutation.start_time > 6*delta);

        leader_mutation.approved = false;
        leader_mutation.mutating = false;
        leader_mutation.clear();
    }
   
    //If replacement doesn't happen quickly enough, can revert back to the old swap
    //Isn't called directly in order to save an extra round of the protocol
    function revertFollower(){
        require(mutable);
        require(follower_mutation.mutating);
        require(now - follower_mutation.start_time > 2*delta);

        follower_mutation.mutating = false;
        follower_mutation.clear();
    }
   
}

\end{lstlisting}

\begin{lstlisting}[language=Solidity]

//Assume the following:

// now is the time the transaction is included in a block 
// A clear() function on the Mutation struct that resets all fields to default values
// A hash function H
// || denotes concatenation of inputs to a hash function
// A sig (digital signature) object with the following functions
//  valid(address) -> bool, returns true if valid signature by address 
//  msg() -> [hash], returns an array of hashlocks if the signature signed such a message, null otherwise 

contract MutSwapBA{
   
    struct FollowerMutation{
        //Signature by follower to allow candidate to take its position
        Sig voucher;
        //Candidate party to replace follower
        address candidate_sender;
        //Hashlock used to replace follower
        uint replace_hash_lock;
        //Time mutation begins
        uint start_time;
        //Flag for freezing asset when a tentative replacement is happening
        bool mutating;
        //Controls whether asset is locked or not during mutation
        bool can_lock_asset;
    }

    struct LeaderMutation{
        //Signature by follower to allow candidate to take its position
        Sig voucher;
        //Candidate party to replace follower
        address candidate_receiver;
        //Party who called mutateLockLeader
        address mutator;
        //Hashlock used to replace follower
        uint replace_hash_lock;
        //Hashlock for exercising option
        uint swap_hash_lock;
        //Time mutation begins
        uint start_time;
        //Used for optimistic execution of protocol
        bool approved;
        //Flag for freezing asset when a tentative replacement is happening
        bool mutating;
        //Controls whether asset is locked or not during mutation
        bool can_lock_asset;
    }
   

    //State information for base swap protocol
    Asset asset; //Reference to preferred token contract
    address sender; //Current sender of escrowed funds
    address receiver; //Current receiver of escrowed funds
    address leader; //Leader of the original swap protocol
    address follower; //Follower of the original swap protocol
    uint swap_hash_lock; //Hashlock of the swap protocol 
    uint T_BA; //Timeout for locked asset
    uint delta = 10 minutes; //Assumed worst case transaction inclusion time

    //State info associated with mutable leader position
    LeaderMutation leader_mutation;

    //State info associated with mutable follower position
    FollowerMutation follower_mutation; 

    //Controls whether follower/leader positions can be changed
    bool mutable;
   
    function MutSwapBA(Asset _asset,uint start,uint dT,address _sender,address _receiver,uint _swap_hash_lock,address _leader,address _follower,bool _mutable){
        require(msg.sender == _sender); //Sender can only escrow their own funds
        require(leader != follower); //Leader and follower should be distinct
        require(follower == _sender);
        require(leader == _receiver);

        this.mutable = _mutable;

        //Inital mutation states
        leader_mutation.mutating = false; 
        leader_mutation.approved = false;
        follower_mutation.mutating = false; 

        sender = _sender;
        receiver = _receiver;
        swap_hash_lock = _swap_hash_lock; //Hashlock for the initial swap
        leader = _leader;
        follower = _follower;
        asset = _asset;
        asset.send(address(this));
        T_BA = start + dT*delta;

        if(mutable){
            leader_mutation.lock_asset = true;
            follower_mutatation.lock_asset = false; //Prevents Bob from denying Alice her optionality by locking her asset
        }
    }
   
    function claim(string secret){

        //If a previous mutation lock was never completed, revert to original swap
        if(mutable){
            if(leader_mutation.mutating){
                revertLeader();
            }
        }

        require(msg.sender == receiver.id); //Only receiver can call claim
        require(now <= T_BA); //Must be before timeout
        require(H(secret) == swap_hash_lock); //Claim conditional on revealing secret
        asset.send(receiver);
    }
   
    function refund(){
        
        //If a previous mutation lock was never completed, revert to original swap
        if(mutable){
            if(leader_mutation.mutating){
                revertLeader();
            }
            if(follower_mutation.mutating){
                revertFollower();
            }
        }

        require(msg.sender == sender); //Only sender can call refund
        require(now > T_BA); //After lock has timed out
        asset.send(sender);
    }

    function mutateLockLeader(Sig sig,address _candidate_receiver, uint _replace_hash_lock,uint _swap_hash_lock){
        require(mutable);
        
        //If a previous mutation lock is stale, then call revert to allow for a new mutation lock to be made
        if(leader_mutation.mutating){
            revertLeader();
        }
        
        require(!leader_mutation.mutating); //Only one mutate_lock at a time 
        require(msg.sender == sender || msg.sender == receiver); //Only sender or receiver can mutate
        if(msg.sender == leader){ //Alice has less time to call mutateLock
            require(T_BA >= now + 7*delta);
        }else if(msg.sender == follower){
            //Bob is given more time to call mutateLock in response to Alice
            require(T_BA >= now + 6*delta);
        }
        
        require(sig.valid(leader)); //Requres Alice's sig of Carol's hashlocks
        require(sig.msg() == [_replace_hash_lock,_swap_hash_lock]); //The msg has to just be the candidate's hashlocks
        leader_mutating.mutating = true;
        leader_mutating.voucher = sig; //Proof that candidate was approved  
        leader_mutating.candidate = _candidate_receiver;
        leader_mutating.mutator = msg.sender;
        leader_mutating.replace_hash_lock = _replace_hash_lock;
        leader_mutating.swap_hash_lock = _swap_hash_lock;
        leader_mutating.start_time = now; //Used for flexible timeouts in the transfer
    }
   

    //Provides similar functionality to mutateLockLeader but doesn't lock the escrowed resource from being locked
    function mutateLockFreeFollower(Sig sig,address _candidate_sender,uint _replace_hash_lock){
        require(mutable);

        require(msg.sender == follower); //Only Bob can call attest
        require(!follower_mutation.mutating);
        require(!follower_mutation.lock_asset);
        require(sig.valid(follower)); //Bob's valid signature
        require(sig.msg() == [_replace_hash_lock]);
        require(T_BA >= now + 2*delta); //Need enough time for David to call claim 

        follower_mutation.mutating = true;
        follower_mutation.voucher = sig; //Proof that candidate was approved  
        follower_mutation.candidate_receiver = _candidate_sender;
        follower_mutation.replace_hash_lock = _replace_hash_lock;
        follower_mutation.start_time = now; //Used for flexible timeouts in the transfer
    }

     function replaceLeader(string secret){
        require(mutable);
        require(leader_mutation.mutating);
        bool candidate_round = false;
        bool follower_round = false;
        
        //If Bob mutated, can skip contesting phases since he is implicitly approving of the signature he called mutateLock with
        if(leader_mutation.mutator == follower || (leader_mutation.mutator == leader && leader_mutation.approved)){ //Can immediately call replace once Bob has mutated 
            //Carol can replace once Bob has called mutate 
            candidate_round = (msg.sender == leader_mutation.candidate_receiver) && (4*delta >= now - leader_mutation.start_time > 0);
            //Bob can call replace 
            follower_round = (msg.sender == follower) && (6*delta >= now - leader_mutation.start_time > 0);

        }
        else{ //Contesting (pessimistic case)
            //Carol can replace in round right after last chance at contesting
            candidate_round = (msg.sender == leader_mutation.candidate_receiver) && (4*delta >= now - leader_mutation.start_time > 2*delta);
            //Bob can replace in the 3 rounds right after last chance at mutating round
            follower_round = (msg.sender == follower) && (6*delta >= now - leader_mutation.start_time > 0);
        }
        
    
        require(candidate_round || follower_round);
        require(H(secret) == leader_mutation.replace_hash_lock);
       
        receiever = leader_mutation.candidate_receiver; //Carol takes receiving optionality from Alice
        swap_hash_lock = mutation.swap_hash_lock;
       
        leader_mutation.approved = false;
        leader_mutation.mutating = false;
        leader_mutation.mutation.clear();
    }
   
    //If no contesting has occurred, transfer can be complete
    function replaceFollower(string secret){
        require(mutable);
        require(follower_mutation.mutating);
        require(2*delta >= now - follower_mutation.start_time > 0); //2 rounds given for replacement
        
        require(H(secret) == follower_mutation.replace_hash_lock);

        sender = follower_mutatation.candidate_sender; //David takes refund optionality from Bob
        
        follower_mutatation.clear();
        follower_mutation.mutating = false;
    }

     //Contest if transferring party tries to transfer to two different parties simultaneously
    //Note this method only does something if it can be proved that Alice lied
    //This method can't do anything if Alice is honest
    function contestLeader(Sig sig,string secret){
        require(mutable);
        require(msg.sender == follower); //Only allow Bob to call this
        require(leader_mutation.mutating); //Can only contest a mutation if one has occured
        require(!leader_mutation.approved); //Cannot contest after approval 
        require(leader_mutation.mutator == leader); //Can only contest a mutation when Alice called it, not Bob
        require(2*delta >= now - leader_mutation.start_time > 0); //Can contest only in the 2 rounds after mutation
        //Can contest if Alice creates two inconsistent signatures or tries to reveal preimage of hashlock too early
        require(sig.valid(leader) || H(secret) == swap_hash_lock); 
        if(sig != leader_mutation.voucher || H(secret) == swap_hash_lock){ //Checks if Alice reported inconsistent sigs
            leader_mutation.clear();
            leader_mutation.mutating = false;
        }
       
    }
   
    //Issue is really in the sequential consistent case
    //Bob can call approve in any next two rounds after a mutation
    function approveLeader(){
        require(mutable);
       require(msg.sender == follower); //Only Bob can approve 
       require(leader_mutation.mutating); //Can only approve after a mutation has begun 
       require(leader_mutation.mutator == leader); //Can only call approve when Alice was the one who called mutate 
       require(2*delta >= now - leader_mutation.start_time > 0); //Can approve only in the 2 rounds after mutation
       leader_mutation.approved = true;
    }
   
   
    //If replacement doesn't happen quickly enough, can revert back to the old swap
    //Isn't called directly in order to save an extra round of the protocol
    function revertLeader(){
        require(mutable);
        require(leader_mutation.mutating);
        require(now - leader_mutation.start_time > 6*delta);

        leader_mutation.approved = false;
        leader_mutation.mutating = false;
        leader_mutation.clear();
    }
   
    //If replacement doesn't happen quickly enough, can revert back to the old swap
    //Isn't called directly in order to save an extra round of the protocol
    function revertFollower(){
        require(mutable);
        require(follower_mutation.mutating);
        require(now - follower_mutation.start_time > 2*delta);

        follower_mutation.mutating = false;
        follower_mutation.clear();
    }
   
}

\end{lstlisting}

\subsection{Proofs}
\seclabel{proof}

\subsubsection{Misc Proofs}

First, we provide proofs for the properties of the leader transfer protocol.

\begin{lemma}
\label{mutate_lock}
If Alice and Carol are both conforming in the Mutate Lock Phase,  then both $AB$ and $BA$ are mutate locked with $(H(C_1),H(C_2))$ by $startLeader + 2\Delta$.
\end{lemma}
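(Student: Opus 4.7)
The plan is to unwind the two steps of the Mutate Lock Phase, using the synchrony bound $\Delta$ at each step and the side-condition $\startLeader \leq T - 9\Delta$ imposed in the protocol to guarantee that Alice's calls to $\mutateLockLeader()$ are not rejected by the contract's timing \texttt{require} clauses.

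First, I would use Carol's conformance: by the prescription for step~1 of Phase~I, Carol creates the $CA$ contract with $\swapHashlock = H(C_1)$ and timeout $\startLeader + 9\Delta$, and transmits $C_{msg} = [H(C_1), H(C_2)]$ to Alice, all by time $\startLeader + \Delta$. By the synchrony assumption, Alice observes both the on-chain escrow of $\Asset_C$ and the off-chain message $C_{msg}$ no later than $\startLeader + \Delta$.

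Next, I would invoke Alice's conformance for step~2. Once she observes Carol's escrow and $C_{msg}$, she issues two concurrent $\mutateLockLeader()$ transactions, one to $AB$ and one to $BA$, each carrying $\replaceHashlock = H(C_1)$ and $\swapHashlock = H(C_2)$, with $\candidateSender = \Carol.\addressVar$ on $AB$ and $\candidateReceiver = \Carol.\addressVar$ on $BA$. By the synchrony bound, both transactions are included in their respective blockchains within $\Delta$, hence by $\startLeader + 2\Delta$.

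The main obstacle, and the only nontrivial step, is to check that these two $\mutateLockLeader()$ calls actually succeed rather than being reverted by the contract's timing guards. The relevant guards require that Alice call $\mutateLockLeader()$ no later than $T - 7\Delta$ on $BA$ (and $T - 8\Delta$ on $AB$ in the code). Since Alice calls by $\startLeader + 2\Delta$ and the setup assumption $\startLeader \leq T - 9\Delta$ gives $\startLeader + 2\Delta \leq T - 7\Delta$, both deadlines are met. The remaining \texttt{require} clauses inside $\mutateLockLeader()$ (Alice's signature on $[H(C_1), H(C_2)]$, no prior mutation pending, caller is sender/receiver) are satisfied by construction since Alice is the leader and this is the first mutation attempted. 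Thus both contracts transition into the mutate-locked state with the desired $(H(C_1), H(C_2))$ pair by $\startLeader + 2\Delta$, establishing the lemma.
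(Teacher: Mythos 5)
Your argument follows essentially the same route as the paper's: unwind the two steps of the Mutate Lock Phase with a $\Delta$ synchrony bound each, and check that the contract-side timing guards (which reduce to the requirement $\startLeader + 2\Delta \leq T - 7\Delta$, guaranteed by $\startLeader \leq T - 9\Delta$) do not reject Alice's calls. That part is fine, and your explicit verification of the timeout \texttt{require} clauses is actually more careful than the paper's.

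There is, however, one small but genuine gap: you justify the guard \texttt{!leader\_mutation.mutating} by asserting that ``this is the first mutation attempted.'' The lemma makes no conformance assumption about Bob, and the contract permits \emph{either} the sender or the receiver to call $\mutateLockLeader()$ provided the call carries Alice's valid signature on the hashlock pair. Bob can therefore front-run Alice (e.g., by replaying the signature from her pending transaction), in which case Alice's own transaction is included but \emph{reverts} on that guard, and your claim that her two transactions succeed is false as stated. The conclusion of the lemma nevertheless survives, for the reason the paper's proof spells out: since conforming Alice signs only $[H(C_1),H(C_2)]$, any competing $\mutateLockLeader()$ call that passes the signature check must install exactly that pair, so in every case both contracts end up mutate-locked with $(H(C_1),H(C_2))$ by $\startLeader + 2\Delta$ --- just not necessarily via Alice's own transactions. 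Adding that one observation closes the gap.
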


\begin{proof}
If Carol is conforming, she escrows her principal to Alice by time $R + \Delta$. Alice, after observing the creation of $CA$, will then call \textit{mutateLockLeader()} within a $\Delta$ on both $AB$ and $BA$, passing $(H(C_1),H(C_2))$ as the transfer hashlock and the new swap hashlock.
For any contract, if it is not mutate locked, then the \textit{mutateLockLeader()} sent by Alice can be included in the blockchain by $startLeader+2\Delta$.
If the \textit{mutateLockLeader()} transaction sent by Alice can not be included in the blockchain due to someone else calling it first, it must be that Bob called \textit{mutateLockLeader()} on the contract with $(H(C_1),H(C_2))$ before her. In any case, within $startLeader+2\Delta$, both contracts are mutate locked by $(H(C_1),H(C_2))$.
\end{proof}

If there is a successfully call to \textit{mutateLockLeader()} on $AB$ with $(H(C_1),H(C_2))$, let the first time that this transaction is called be $t_{AB}$.
Similarly, let $t_{BA}$ be the first time \textit{mutateLockLeader()} is called on $BA$ with $(H(C_1),H(C_2))$.

\leaderLivenessThm*
\begin{proof}

If Alice starts the transfer procedure, and all parties are conforming, then Carol owns the option and paid her principal to Alice. 
The argument is as follows.

By Lemma \ref{mutate_lock}, we know \textit{mutateLockLeader()} has been called by $R + 2\Delta$ on $AB$ and $BA$ with $(H(C_1),H(C_2))$.
After $\max\{t_{AB},t_{BA}\}+2\Delta \leq startLeader + 4\Delta$, Carol will know that Bob cannot successfully call \textit{contest()} on either $AB$ or $BA$.
Carol then releases her secret $C_1$ to $AB$ and $BA$ by $\max\{t_{AB},t_{BA}\}+3\Delta \leq startLeader + 5\Delta$.  
Then $AB$ and $BA$ are transferred to a new state where the swap hashlock is $H(C_2)$, the sender of $AB$ contract becomes Carol, and the receiver of $BA$ contract becomes Carol.
At this point, these contracts can be more appropriately renamed as as $CB$ and $BC$ respectively.
That is to say, Carol has set up a swap with Bob using her swap hashlock $H(C_2)$. 
Carol now owns the option.
Because the timeout on $CA$ is $startLeader + 10\Delta$, Alice can use $C_1$ to redeem Carol's principal on $CA$ by revealing $C_1$ by $startLeader + 6\Delta$.
.
\end{proof}

\leaderTransferIndThm*
\begin{proof}
If Alice and Carol are both conforming, then by Lemma \ref{mutate_lock}, we know  \textit{mutateLock()} will be called with $(H(C_1),H(C_2))$ on both $AB$ and $BA$ by $startLeader + 2\Delta$.
In the worst case, after $\max\{t_{AB},t_{BA}\}+2\Delta$, Carol should release $C_1$ to replace Alice on both $AB$ and $BA$.
This results in the replacement of the old swap hashlock $H(A_1)$ with $H(C_2)$. 
Bob does not have to call any functions to help Carol replace Alice's role. 
Thus, conforming Alice and Carol can trade without Bob's cooperation.
\end{proof}

We have proven that conforming Alice and Carol can trade the option without Bob cooperating.
Now we look at this property from another perspective.
Can Bob actively prevent Alice selling her option to Carol? 
The following result guarantees Bob cannot block Alice and Carol.

\leaderNonblockThm*

\begin{proof}
 The only way for Bob to block the transfer of an option is to contest. To contest, Bob needs to show a different mutation signed by Alice or by showing Alice's secret to the old hashlock. 
 As long as Alice is conforming to the protocol and does not collude with Bob, which is the case in this context, an adversarial Bob cannot forge her signature and successfully call \textit{contest()}.
\end{proof}

\leaderTransferAtomThm*

\begin{proof}
Conforming Carol only loses her principal if she releases her secret $C_1$. She only releases her secret $C_1$ after she sees both $AB$ and $BA$ are both mutate locked by $(H(C_1),H(C_2))$ and at least one of the conditions are met: (1)Bob has  approved both $AB$ and $BA$; (2) both of his contest windows on $AB$ and $BA$ have ended.
Bob can no longer contest this result either due to timeout of his contesting window or Bob has approved this result. 
Then when Carol sends her secret $C_1$ to both $AB$ and $BA$, she replaces Alice's role and changes the swap hashlock to $H(C_2)$ thus she owns the option. 

Conforming Alice loses her option only if Carol releases her secret $C_1$ on some contract.
We know by Lemma \ref{mutate_lock} that in the worst case \textit{mutateLockLeader()} will be called on $AB$ and $BA$ by $startLeader + 2\Delta$.
Thus, the latest time any party can call \textit{replace()} is.  $startLeader + 8\Delta$.
So if Alice loses her option there must have been a \textit{replace()} call by $startLeader + 8\Delta$.
Since $CA$ has timeout $startLeader + 10\Delta$, Alice is guaranteed enough time to claim Carol's principal.
\end{proof}

The remaining proofs are for the properties of the follower transfer protocol.

\begin{lemma}
\label{follower_bob_david_compliant}
In Protocol \ref{transfer_follower}, if Bob, David are both compliant, and Alice does nothing, then Bob gets David's principal, and David gets Bob position.
\end{lemma}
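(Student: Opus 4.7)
The plan is to trace execution of Protocol~\ref{transfer_follower} step by step under the hypotheses, and to verify that each party's action is enabled by the corresponding contract precondition at the time it is called. Since Alice is assumed to do nothing, in particular she never publishes $A_1$, which means the ``Alice has not revealed $A_1$'' guard in the \textit{Mutate Phase} is satisfied and she never initiates a claim on $BA$ that would race with Bob's mutation.

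First I would invoke the synchrony bound $\Delta$ to pin down absolute timestamps. Compliant David escrows $\Asset_D$ in the $DB$ contract with $\swapHashlock = H(D_1)$ and timeout $\startFollower + 5\Delta$ by time $\startFollower + \Delta$. Observing this, compliant Bob (since Alice is silent) calls $\mutateLockFollower()$ on $AB$ and $\mutateLockFreeFollower()$ on $BA$ by $\startFollower + 2\Delta$, each tentatively installing $\replaceHashlock = H(D_1)$ and the candidate address $\David.\addressVar$. By time $\startFollower + 3\Delta$ David observes both mutations and, per the \textit{Replace/Revert Phase}, concurrently calls $\replaceFollower(D_1)$ on $AB$ and $BA$; this satisfies the contract guard $2\Delta \geq now - follower\_mutation.start\_time > 0$ because the mutation started no earlier than $\startFollower + \Delta$ and replacement happens by $\startFollower + 3\Delta$. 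The $\replaceFollower$ call swaps Bob's address out for David's on both contracts, which is exactly ``David gets Bob's position.'' Finally, Bob sees the preimage $D_1$ revealed on-chain by $\startFollower + 3\Delta$ and invokes $\claim(D_1)$ on $DB$ by $\startFollower + 4\Delta$, which is comfortably before the $DB$ timeout $\startFollower + 5\Delta$, yielding ``Bob gets David's principal.''

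The only real obstacle is checking that the timeouts line up and that Alice's inactivity never becomes an obstruction. For the $AB$ contract, $\mutateLockFollower$ requires $T_{AB} \geq now + 3\Delta$, and for $BA$, $\mutateLockFreeFollower$ requires $T_{BA} \geq now + 2\Delta$; these follow from the assumption that the follower transfer is initiated while the outer swap has ample time remaining (the protocol description bakes the requirement $\startFollower + 5\Delta$ into the $DB$ timeout, and by assumption the $AB/BA$ timeouts $T, T+\Delta$ were chosen to accommodate a follower transfer). Since Alice neither mutates nor claims, there is no contention on either contract that could reject Bob's transactions, and no race in which Alice preemptively withdraws $\Asset_B$ before David's $\replaceFollower$ takes effect. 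Thus both conclusions of the lemma follow from the timing chain $\startFollower + \Delta \to 2\Delta \to 3\Delta \to 4\Delta$ combined with the contract preconditions, which is exactly what I would write out in the proof.
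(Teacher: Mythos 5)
Your proposal is correct and follows essentially the same argument as the paper's own proof: a step-by-step timing trace showing the $DB$ escrow by $\startFollower+\Delta$, Bob's mutations by $\startFollower+2\Delta$, David's $\replaceFollower(D_1)$ calls by $\startFollower+3\Delta$ (giving David Bob's position), and Bob's $\claim(D_1)$ on $DB$ before the $\startFollower+5\Delta$ timeout. The extra checks you add on the contract guards ($2\Delta \geq now - \startTime > 0$ and the $T_{AB},T_{BA}$ preconditions) are a harmless elaboration of what the paper leaves implicit.
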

\begin{proof}
By assumption, we know the \textit{Mutate Phase} ends $startFollower+2\Delta$.
At this point, $candiate\_sender = David.address$ on $BA$ and $candidate\_receiver = David.address$ on $AB$.
By $startFollower + 3\Delta$ we know $David$ will reveal $D_1$ to $AB$ and $BA$.
At this point $sender = David$ on $BA$ and $receiver = David$ on $AB$.
Thus David has successfully taken Bob's position.
When Bob forwards $D_1$ to the $DB$, it will arrive no later than $startFollower + 4\Delta < startFollower + 5\Delta$.
So Bob gets David's principal.
\end{proof}

\followerLivenessThm*
\begin{proof}
Because Alice is compliant and doesn't reveal $A_1$, she does nothing according to Protocol \ref{transfer_follower}.
Since Bob and David are both compliant, the result follows from Lemma \ref{follower_bob_david_compliant}.
\end{proof}

\followerTransferIndThm*
\begin{proof}
Assume Alice does nothing and Bob, David are compliant.
By Lemma \ref{follower_bob_david_compliant}, the result follows immediately.
\end{proof}

\followerNonblockThm*
\begin{proof}
Alice's only ability during Protocol \ref{transfer_follower}, is to choose to reveal her secret $A_1$.
Since Bob and David are compliant, David creates $DB$ by $startFollower + \Delta$.
In the worst case, Bob calls $mutateLockFreeFollower()$ on $BA$ at the last available opportunity, at $T - 2\Delta$.
Since Bob is compliant, he also calls $mutateLockFollower()$ on $AB$ by $T - 2\Delta$.
Since David is compliant, he calls $replace()$ on $AB,BA$ by $T - \Delta$, thereby revealing $D_1$ to Bob.
Both replace calls are then completed by $startFollower + 3\Delta$.
Bob then has enough time to reveal $D_1$ on $DB$ since its timeout is $startFollower + 5\Delta$.
If Alice reveals $A_1$ by $T$ on $BA$, then David will have Bob's position so he on $AB$ and $BA$.
Thus he can reveal $A_1$ on $AB$ since it has timeout $T + \Delta$.
\end{proof}

\followerTransferAtomThm*

\begin{proof}
Suppose a conforming Bob loses his swap position.
By Protocol \ref{transfer_follower}, we know $mutateLockFollower()$ and $mutateLockFreeFollower()$ are called on $AB$ and $BA$ respectively by $startFollower + 2\Delta$ by Bob.
By assumption Bob loses his swap position.
Thus David must have revealed $D_1$ on $AB$ or $BA$ before the $replaceFollower()$ timeout.
Since each $replaceFollower()$ timeout is $2\Delta$, the latest David could have revealed $D_1$ is $startFollower + 4\Delta$.
Conforming Bob sends $D_1$ to $DB$ which arrives by $startFollower + 5\Delta$.
Since the timeout on $DB$ is $startFollower + 5\Delta$, this ensures Bob gets David's principal.
\end{proof}

\subsubsection{No UNDERWATER}
\label{underwater_proofs}

The goal of this section is to prove Theorem \ref{underwater_thm}.

If the reselling is started by Alice calling \textit{mutateLockLeader()} on $AB$ and/or $BA$ contract, the state of the corresponding contract will be changed. 
We first define states of contracts. 
These are referenced from \ref{sec:code}.
A smart contract's state is defined by the following elements:
\begin{itemize}
    \item \textit{sender}: \textit{sender} can  call \textit{refund()} if the receiver does not redeem it before swap hashlock times out.
    
    \item \textit{receiver}: \textit{receiver} can call \textit{claim()} and pass the preimage of swap hashlock to get the asset escrowed in the contract.
    
    \item  \textit{mutating}: \textit{mutating} is a bool where \textit{mutating=true} means the asset is temporarily not redeemable due to some active mutation in progress.
    In other words, the right to claim the asset is locked.
    On the other hand, \textit{mutating=false} means the asset can be claimed if the preimage to the hashlock is sent to the contract.

    \item \textit{swap\_hash\_lock}: The swap hashlock is the hashlock used to claim assets. 
    If the preimage of the hashlock is sent to $\textit{claim()}$, then the receiver can claim the asset.
    
    \item \textit{replace\_hash\_lock}: If  \textit{replace\_hash\_lock != nil}, that means Alice would like to transfer her option. \textit{replace\_hash\_lock} stores the hashlock for the transfer.
    \item \textit{candidate\_sender/candidate\_receiver}: denotes the new sender or receiver when transfer is finalized, i.e. Carol replaces Alice's role.
\end{itemize}

We take $AB$ as an example to illustrate the states. 
The states on $BA$ are symmetric which can be inferred from the context.

\begin{itemize}
    \item  $Ready2Claim(H(A_1),AB)$ denotes  state $(sender=Alice, receiver=Bob, mutating =false, swap\_hash\_lock=H(A_1))$, meaning if $A_1$ is sent before the contract expires, Bob can claim Alice's principal. 
    We will use $Ready2Claim(H(A_1))$ for simplicity when which contract we are referring to is obvious from the context.
    
    \item $MutateLockContestable(Carol,H(C_1),H(C_2),AB)$ denotes the state $(sender=Alice, receiver=Bob, mutating =true, replace\_hash\_lock = H(C_1), swap\_hash\_lock=H(C_2), candidate_receiver = Carol )$, which means Alice tentatively locks the asset and transfers her role to Carol.   
    In this state Bob is able to call \textit{contest()}.
    We will use $MutateLockContestable(H(C_1),H(C_2))$ for simplicity when other parts we are referring to are obvious from the context. 
    $MutateLockContestable$ is just a preparation for Carol to replace Alice's role. 
    The \textit{replaceLeader()} cannot be called until the state defined below.
  
    \item $MutateLockNonContestable(Carol,H(C_1),H(C_2), AB)$. 
    In this state Bob cannot call \textit{contest()} because his contest window has passed or he approved the mutation. At this state, Carol can release her secret $C_1$ to replace Alice.
    The state when Carol can replace Alice is denoted as $MutateLockNonContestable(Carol,H(C_1),H(C_2), AB)$.   
    $MutateLockNonContestable(H(C_1),H(C_2))$ is the short name.
    
    \item $Ready2Claim(H(C_2),CB)$. 
    After $MutateLockNonContestable(Carol,H(C_1),H(C_2), AB)$ is reached, Carol can call \textit{replace()} to replace Alice's role by sending $C_1$. 
    After the replacement, the state becomes $(sender=Carol, receiver=Bob, mutation\_lock=false, swap\_hash\_lock=H(C_2))$, denoted as $Ready2Claim(H(C_2),CB)$.
    $Ready2Claim(H(C_2))$ is the short name.
    
    \item $Claimed(H(S),AB)$ means the asset on $AB$ contract is claimed by a secret $S$ which is the preimage to $H(S)$ and Bob gets the asset. 
    In this context, $S$ can be $A_1$ or $C_2$.
    $Claimed(H(S))$ is the short name.
\end{itemize}

The state transition is depicted in Fig~\ref{fig:state_transition}.

\begin{figure}
    \centering
    \includegraphics[width = \linewidth]{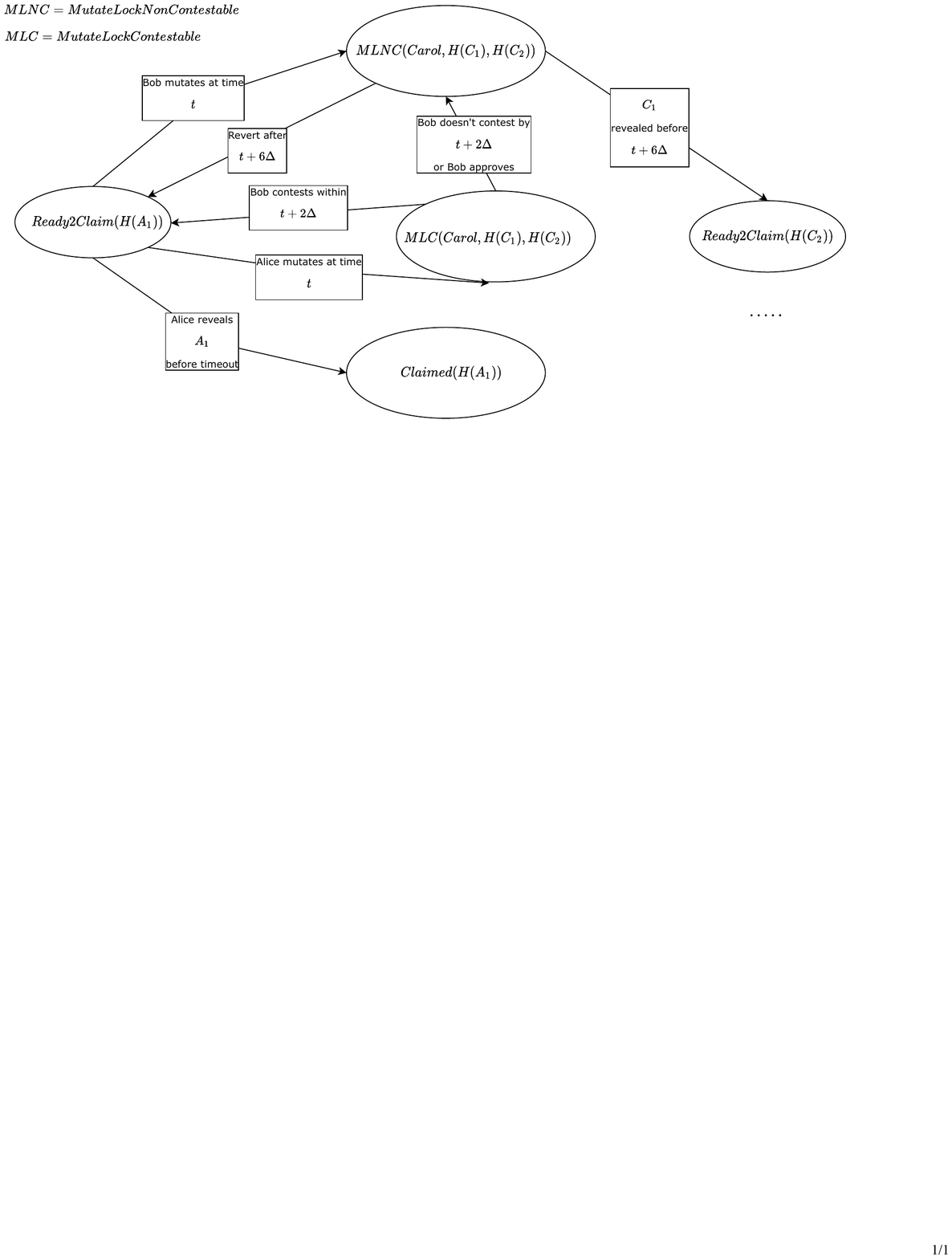}
    \caption{State transition until Carol replaces Alice on $AB$ contract }
    \label{fig:state_transition}
\end{figure}

There are two stages of changes of state on the $AB$ and $BA$ contracts. 

\begin{enumerate}
    \item First stage. 
    Alice and Bob aim to change the state of both contracts from $Ready2Claim(H(A_1))$ to a mutation state $MutateLockNonContestable(H(C_1),H(C_2))$, during which  state $MutateLockContestable(H(C_1),H(C_2))$ may be reached temporally, corresponding to Mutate Lock Phase and Consistency Phase.
     
    \item Second stage.  
    Starting from $MutateLockNonContestable(H(C_1),H(C_2))$, Bob and Carol are involved in an atomic change from  $MutateLockNonContestable(H(C_1),H(C_2),Carol)$ to $Ready2Claim(H(C_2))$ by Carol releasing a secret $C_1$. 
    It is similar to a standard atomic swap where Carol is the leader and Bob is the follower, corresponding to the Replace/Revert Phase.
\end{enumerate}

\begin{theorem}
Conforming Alice will never end up in \textit{UNDERWATER}.
\end{theorem}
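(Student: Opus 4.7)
The plan is to argue by contradiction (or equivalently by case analysis): assume conforming Alice loses her outgoing principal $\Asset_A$ escrowed on $AB$, and exhibit that she must have received either $\Asset_B$ from $BA$ or $\Asset_C$ from some $CA$. Alice only loses $\Asset_A$ when a successful $\claim()$ is executed on $AB$, which requires (i) the caller to be the current $\receiver$ of $AB$ and (ii) the preimage of the \emph{current} $\swapHashlock$ of $AB$ to be supplied. The two cases to analyze are therefore: (A) at the moment of the successful claim, $AB$ still has $\swapHashlock = H(A_1)$ (no replacement ever finalized), and (B) $AB$ has $\swapHashlock = H(C_2)$ for some Carol, i.e.\ a replacement has finalized.

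In case (A), the only known preimage is $A_1$, which is secret to Alice. A conforming Alice reveals $A_1$ only when she herself calls $\claim()$ on $BA$ while $BA$ still has $\swapHashlock = H(A_1)$ and has not timed out; that call, by the contract code, transfers $\Asset_B$ to Alice before $A_1$ becomes public knowledge. Hence whenever $A_1$ is revealed to $AB$, Alice has already obtained $\Asset_B$ on $BA$. (One sub-case to address explicitly is that of a conforming Alice who never reveals $A_1$: then nobody can claim $\Asset_A$ under $H(A_1)$, so she either still owns her principal or the state on $AB$ was moved to $\Ready2Claim(H(C_2))$, which is case (B).)

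In case (B), the $\swapHashlock$ on $AB$ changed from $H(A_1)$ to $H(C_2)$ only via a $\replaceLeader()$ call that revealed $C_1$ matching the $\replaceHashlock$ recorded in a prior $\mutateLockLeader()$ on $AB$. Since Alice is conforming, she signed and submitted $[H(C_1),H(C_2)]$ only after observing a properly-formed $CA$ contract escrowing $\Asset_C$ with $\swapHashlock = H(C_1)$ and timeout $\startLeader + 9\Delta$; otherwise she would not have proceeded. Once $C_1$ appears on-chain in any successful $\replaceLeader()$ call, Alice relays it to $CA$ via $\claim()$ to collect $\Asset_C$.

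The main obstacle is the timing argument inside case (B): I must show that the earliest point at which Alice's position can be considered lost leaves her at least one $\Delta$ before $CA$'s timeout to claim $\Asset_C$. This follows from the timeout schedule given in the Timeouts paragraph of Section~\ref{transfer_leader}: any successful $\replaceLeader()$ must occur by $AB.\mutation.\startTime + 6\Delta$, and because conforming Alice calls $\mutateLockLeader()$ only after $CA$ is in place with timeout $\startLeader + 9\Delta$ and the start of the mutation on $AB$/$BA$ lies within $\startLeader + 2\Delta$, the window in which $C_1$ may first become visible closes at $\startLeader + 8\Delta$, leaving Alice the one $\Delta$ she needs to submit $\claim(C_1)$ on $CA$ before it expires. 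Aside from this bookkeeping, a minor sub-case to be handled is Protocol~\ref{multi_transfer_leader} (multiple buyers), where the same argument applies per buyer $Carol_i$: the $i$-th replacement uses the $i$-th $CA_i$ contract, whose timeout Alice, being conforming, has verified against the schedule before signing, so each $C_{i,1}$ that becomes public is claimable by Alice on the corresponding $CA_i$.
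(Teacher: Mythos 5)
Your proposal is correct in substance but argues in the opposite direction from the paper. The paper's proof runs \emph{forward} along the protocol timeline and the contract state machine: it first disposes of the case where Carol never properly creates $CA$, then invokes the mutate-lock lemma to show both contracts reach the non-contestable mutation state by $\startLeader+4\Delta$ (using that Bob cannot forge Alice's signature to contest), and finally splits on whether Carol releases $C_1$ --- if she does, Alice claims $\Asset_C$; if she does not, both contracts revert and Alice keeps her original option, whose safety is inherited from the underlying atomic swap. You instead work \emph{backward} from the event ``Alice loses $\Asset_A$'' and split on which $\swapHashlock$ governed the loss: under $H(A_1)$ the preimage is Alice's secret, which a conforming Alice only reveals in a $\claim()$ on $BA$ that has already paid her $\Asset_B$; under $H(C_2)$ the replacement necessarily exposed $C_1$, which she relays to $CA$. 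Your route is more self-contained (it does not need the mutate-lock lemma or the consistency-phase analysis at all), and your timeout bookkeeping ($\replaceLeader()$ by $\startLeader+8\Delta$ versus the $CA$ timeout of $\startLeader+9\Delta$) matches the protocol's stated schedule, whereas the paper's appendix inconsistently quotes $\startLeader+10\Delta$.

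One small gap in your opening premise: you assert that Alice can lose $\Asset_A$ \emph{only} via a successful $\claim()$ on $AB$. That is not quite exhaustive --- a finalized $\replaceLeader()$ on $AB$ rewrites $AB.\sender$ to Carol, so if the option then expires unexercised, the subsequent $\refund()$ pays $\Asset_A$ to Carol rather than to Alice, with no $\claim()$ ever occurring. The conclusion survives because this path also requires $C_1$ to have been revealed on-chain within the same window, so it folds into your case (B); but you should state the trichotomy (claim under $H(A_1)$, claim under $H(C_2)$, or refund to a replaced sender) explicitly rather than letting the third branch ride implicitly on the second. The remark about Protocol~\ref{multi_transfer_leader} is harmless but out of scope, since the theorem as stated covers only Protocols~\ref{transfer_leader} and~\ref{transfer_follower}.
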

\begin{proof}
If Carol never creates $CA$ or creates $CA$ with different conditions than what Carol and Alice agreed upon, then Alice does nothing and so she doesn't lose her option.

Otherwise, consider the case when Carol creates $CA$ with the conditions Alice and Carol had agreed upon.

If  Alice is conforming, by Lemma \ref{mutate_lock} then both $AB$ and $BA$ are mutate locked with $(H(C_1),H(C_2))$ by $startLeader + 2\Delta$.
This means $AB$ and $BA$ both have reached the state $MutateLockContestable(H(C_1),H(C_2))$ by $startLeader + 2\Delta$.
Since Alice is conforming, by $startLeader + 4\Delta$, the state will transition to $MutateLockNonContestable(H(C_1),H(C_2))$ since Bob cannot forge Alice's signature and contest.

After $MutateLockNonContestable(H(C_1),H(C_2))$, if Carol releases $C_1$, Alice observes it in the Replace/Revert phase, and can get Carol's principal as shown in Theorem \ref{leader_transfer_atom_thm}.
If Carol does not release $C_1$, then eventually, after $t_{AB}+6\Delta$ and $t_{BA}+6\Delta$ respectively, the states on both contracts can be reverted to $Ready2Claim(H(A_1))$ which means Alice still owns the option and it is unlocked. If Alice finally owns the option, then she does not end up in \textit{UNDERWATER} by the guarantee of atomic swap. If Alice loses her option(maybe only lose a role on one contract), she gets Carol's principal, with/without losing her principal escrowed in her old option, which is acceptable for her.
\end{proof}

\begin{theorem}
Conforming Carol will never end up in \textit{UNDERWATER}.
\end{theorem}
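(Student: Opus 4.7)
The plan is to trace when Carol's outgoing principal $\Asset_C$ can actually leave her hands, show that this happens only after her secret $C_1$ becomes public, and then argue that whenever $C_1$ becomes public the leader replacement on both $AB$ and $BA$ is forced to complete, delivering to Carol the incoming principal (Alice's former position in the swap).

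First I would observe that $\Asset_C$ is escrowed on $CA$ under the hashlock $H(C_1)$ with timeout $\startLeader + 9\Delta$, and that $C_1$ is generated and held privately by Carol. Hence Alice can claim $\Asset_C$ only if she learns $C_1$, and if $C_1$ never becomes public then Carol can $\refund()$ $\Asset_C$ after the $CA$ timeout and is trivially safe. Next, I would enumerate the moments a conforming Carol releases $C_1$: by the Replace/Revert Phase of Protocol~\ref{transfer_leader}, she only calls $\replaceLeader()$ on both contracts after each has reached the non-contestable mutation state with the agreed pair $(H(C_1), H(C_2))$ and only when enough time remains in each of her $4\Delta$ replacement windows. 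In every other branch, Carol abstains, the tentative mutations revert by $\revertLeader()$, and she refunds $\Asset_C$ after the $CA$ timeout.

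The main step is to show that once Carol has issued her two concurrent $\replaceLeader()$ calls, the replacement actually finalizes on both contracts, so her incoming position — which yields either $\Asset_B$ via $\claim()$ on $BA$ with $C_2$ or $\Asset_A$ via $\refund()$ on $AB$ — is secured. By Lemma~\ref{mutate_lock} together with the staggering bound of at most $\Delta$ between the two mutation start times, Carol's $4\Delta$ windows on $AB$ and $BA$ overlap for at least $3\Delta$, and since each of her transactions is included within $\Delta$ of issuance, both calls land inside their windows. The main obstacle is the partial-reveal case, where one of Carol's $\replaceLeader()$ transactions arrives first while scheduling threatens the other: I would close this gap by noting that once $C_1$ is published on one contract, anyone (in particular Bob) may forward $C_1$ to the lagging contract via $\replaceLeader()$ within the $6\Delta$ follower window, which by the stagger and propagation bounds strictly dominates Carol's window. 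After both contracts transition to $Ready2Claim(H(C_2))$, Carol is the new $\sender$ of $AB$ and new $\receiver$ of $BA$, so she holds Alice's former position and is never UNDERWATER.
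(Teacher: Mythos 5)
Your proposal is correct and follows essentially the same route as the paper: Carol is silent (and hence refundable on $CA$) until both contracts reach the non-contestable mutation state with $(H(C_1),H(C_2))$, she parts with $C_1$ only then, and from that point acquiring Alice's position guarantees her either Bob's principal via $\claim()$ on $BA$ or Alice's via $\refund()$ on $AB$. You additionally spell out why the two $\replaceLeader()$ calls both land (the $\le\Delta$ stagger and overlapping windows), a detail the paper's own proof of this theorem leaves implicit, so no gap remains.
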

\begin{proof}
After Carol escrows her principal to Alice on $CA$ and sends her hashlocks $H(C_1),H(C_2)$ to Alice, Carol does not do anything in the first stage but observe.
She only joins in the protocol after she observes that both contracts are in  $MutateLockNonContestable(H(C_1),H(C_2))$ state. 
Otherwise, she is silent and she will not end up in \textit{UNDERWATER} since her principal will be refunded eventually. 
After she observes $MutateLockNonContestable(H(C_1),H(C_2))$ state on both contracts, she releases $C_1$, moving $MutateLockNonContestable(H(C_1),H(C_2))$ state to $Ready2Claim(H(C_2))$. Then she owns the option provided by Bob and takes over Alice's option.
In that case, she may lose her principal to Alice due to the release of $C_1$.
Say, in the worst, she loses her principal to Alice.

Then, if Carol decides to exercise the option, she gets Bob's principal. 
That means, her principal is exchanged with Bob's. 
If she does not exercise the option and let it expire, she can get Alice's principal. That means, her principal is exchanged with Alice's. In any case, Carol never ends up in \textit{UNDERWATER}.
\end{proof}

To prove that conforming Bob does not end up in \textit{UNDERWATER}, we need to to prove the consistency of states on two contracts. Specifically,

\begin{itemize}
    \item In the first stage, if Alice calls \textit{mutateLockLeader()} in any contract, either both contracts eventually gets $MutateLockNonContestable(H(C_1),H(C_2)$ or reverted back to $Ready2Claim(H(A_1))$. 
    \item  In the second stage, if Carol releases $C_1$, both contracts are eventually at state $Ready2Claim(H(C_2))$. 
    If Carol does not releases $C_1$,  both contracts are eventually are in state $Ready2Claim(H(A_1))$.
\end{itemize} 

Suppose, without loss of generality, that $AB$ contract is the first to update its state from $Ready2Claim(H(A_1))$ to a new state by Alice calling \textit{mutateLockLeader()} at some time $t$, assuming Bob is conforming.
\begin{theorem}
If $BA$ is not claimed by $t+\Delta$, then $BA$ will be mutate locked by $t+\Delta$. 
\end{theorem}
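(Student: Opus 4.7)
The plan is to force the outcome via Bob's conforming behavior: once AB updates at time $t$, Bob immediately reads Alice's signature off the AB ledger and submits his own $\mutateLockLeader()$ on BA using that signature, and the synchrony bound ensures his transaction is included on BA no later than $t+\Delta$. Since the only two state transitions out of $Ready2Claim(H(A_1))$ on BA are mutate-locking and claiming, and the hypothesis excludes claiming, BA must end up mutate-locked by $t+\Delta$.

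First I would observe that when Alice's $\mutateLockLeader()$ transaction is included on AB at time $t$, its payload is publicly visible; in particular, Bob can read Alice's signature $\sigma$ on $[H(C_1),H(C_2)]$ together with the candidate $\Carol.\addressVar$. The Consistency Phase of Protocol \ref{transfer_leader} instructs a conforming Bob who observes a mutation on one contract but not on the other to forward $\sigma$ to the lagging contract. Since AB is, by hypothesis, the first of the two contracts to update, BA is still in $Ready2Claim(H(A_1))$ at time $t$; Bob therefore broadcasts his own $\mutateLockLeader()$ on BA at time $t$, and by the synchrony bound this transaction is included on the BA ledger at some time $\tau\le t+\Delta$.

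Next I would verify that Bob's forwarded call satisfies the guards of $\mutateLockLeader()$ on BA: Alice's signature $\sigma$ is valid (inherited from its acceptance on AB), Bob is BA's $\sender$ so the caller check passes, and the timing guard $T_{BA}\ge\text{now}+6\Delta$ holds because the timeout design imposes $t\le T-7\Delta$, giving $\tau\le t+\Delta\le T-6\Delta=T_{BA}-6\Delta$. Thus if BA is still in $Ready2Claim(H(A_1))$ at time $\tau$, Bob's call goes through and BA becomes mutate-locked by $t+\Delta$. If BA was already mutate-locked before $\tau$ (for instance, because Alice's own $\mutateLockLeader()$ on BA landed earlier than Bob's forward, possibly with a different signature), then BA was mutate-locked at some time $\le\tau\le t+\Delta$, which is also what we want.

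The one remaining case, and the main obstacle, is that BA might already be in a Claimed state when Bob's transaction is processed, so Bob's call reverts and BA is not mutate-locked. However, this can only occur if BA was claimed at some time $\le\tau\le t+\Delta$, directly contradicting the hypothesis that BA is not claimed by $t+\Delta$. Combining the three cases, BA is mutate-locked by $t+\Delta$ whenever it has not been claimed by $t+\Delta$, completing the argument. Note that the signature forwarded by Bob need not agree with whatever Alice may later submit to BA; consistency between the two contracts is a separate concern, enforced by $\contestLeader()$ in the subsequent $2\Delta$ window.
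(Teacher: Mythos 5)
Your proposal is correct and follows essentially the same route as the paper's proof: a conforming Bob forwards Alice's signed mutation from $AB$ to $BA$, synchrony guarantees inclusion within $\Delta$, and either Bob's call or an earlier call by Alice mutate-locks $BA$, with the "already claimed" case ruled out by hypothesis. Your version is somewhat more careful than the paper's (explicitly checking the $\mutateLockLeader()$ guards and the timing bound $t\le T-7\Delta$), but the underlying argument is the same.
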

\begin{proof}
Since Bob is conforming, when Alice calls \textit{mutateLockLeader()} on the $AB$ contract, Bob can send \textit{mutateLockLeader()} transaction on the $BA$ within $\Delta$. Recall that the start time that \textit{mutateLockLeader()} is called is denoted as $t_{AB}$ and $t_{BA}$ respectively.
Then in $BA$ \textit{mutateLockLeader()} is called on $BA$ by $t_{AB}+\Delta$.

If Bob's \textit{mutateLockLeader()} is included on $BA$, then $t_{BA}\leq t_{AB}+\Delta$.
If Bob's \textit{mutateLockLeader()} on $BA$ is not included, then it must be Alice was able to call it before he was. 
In either case, $t_{BA}\leq t_{AB}+\Delta$. 
\end{proof}

The case that by $t+\Delta$, the other contract's state changes to $Claimed(H(A_1))$ will be analyzed in Theorem \ref{theorem:goodbob}.

\begin{theorem}\label{theorem:after2.0}
Suppose two contracts both eventually reach  $MutateLockNonContestable(H(C_1),H(C_2))$. Then, either both contracts are eventually 
$Ready2Claim(H(A_1))$ or $Ready2Claim(H(C_2))$.
\end{theorem}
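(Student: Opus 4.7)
The plan is to case-split on whether Carol's secret $C_1$ is ever exposed on either contract before the respective $replaceLeader()$ windows close. Throughout, let $t_{AB}$ and $t_{BA}$ denote the times at which $mutateLockLeader()$ was first included with the agreed hashlocks $(H(C_1),H(C_2))$ on $AB$ and $BA$, respectively. By Lemma~\ref{mutate_lock} and the synchrony bound $\Delta$, these times are staggered by at most $\Delta$, i.e.\ $|t_{AB}-t_{BA}|\le \Delta$. First I would enumerate the possible outgoing transitions from $MutateLockNonContestable(H(C_1),H(C_2))$ on either contract. Inspecting the code, the only exits are (i) a successful $replaceLeader()$ call, which requires the preimage $C_1$ of $replace\_hash\_lock$ and must be included before $mutation.startTime+6\Delta$, yielding $Ready2Claim(H(C_2))$; or (ii) a $revertLeader()$ trigger, enabled once $6\Delta$ has elapsed since $mutation.startTime$ and invokable directly or implicitly from $claim$, $refund$, or a fresh $mutateLockLeader$, yielding $Ready2Claim(H(A_1))$. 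Note that $contestLeader()$ is already disabled in this state.

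Next I would handle the first case: suppose $C_1$ is first exposed, WLOG on $AB$, at some time $t^\ast \le t_{AB}+4\Delta$. Then $AB$ immediately transitions to $Ready2Claim(H(C_2))$. Because Bob is conforming, he reads $C_1$ from the block that includes the call and, per the Replace/Revert Phase, submits $replaceLeader(C_1)$ to $BA$, which by synchrony is included by $t^\ast+\Delta$. I would then check that this inclusion lies within Bob's window $[t_{BA},\,t_{BA}+6\Delta]$ on $BA$: using $t^\ast\le t_{AB}+4\Delta$ and $t_{AB}\le t_{BA}+\Delta$ gives $t^\ast+\Delta \le t_{AB}+5\Delta \le t_{BA}+6\Delta$, so Bob's forwarded call fires in time and $BA$ also transitions to $Ready2Claim(H(C_2))$. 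The symmetric argument covers the case where the reveal happens first on $BA$.

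In the complementary case, $C_1$ is never exposed on either contract before its $replaceLeader()$ window closes. Then on each contract the $6\Delta$ timeout passes without a successful replacement, $revertLeader()$ becomes enabled, and both contracts ultimately settle at $Ready2Claim(H(A_1))$ (either by an explicit call or by the implicit reverts embedded in $claim$, $refund$, or a subsequent $mutateLockLeader$).

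The hard part is the timing inequality in the first case. It relies on two finely tuned quantities stacking just right: the $\Delta$-bound on the stagger between $t_{AB}$ and $t_{BA}$ that comes from synchrony plus Lemma~\ref{mutate_lock}, and the deliberate $2\Delta$ slack by which Bob's $replaceLeader()$ deadline exceeds Carol's. In the worst case where Carol reveals on the later-mutated contract at her last permitted moment, Bob needs exactly one $\Delta$ to observe the reveal and one more $\Delta$ for his forwarded transaction to be mined before the other contract's window expires; verifying that this fits, in every stagger pattern, is the only nontrivial step and is precisely what the deadline design targets.
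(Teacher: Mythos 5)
Your proof is correct and follows essentially the same route as the paper's: both arguments case-split on whether $C_1$ is released within Carol's window, bound the stagger of the two mutation start times by $\Delta$, and verify the key inequality that Bob's $6\Delta$ \textit{replaceLeader()} deadline on the later contract exceeds Carol's $4\Delta$ reveal deadline on the other by at least the $\Delta$ Bob needs to forward the secret. The only cosmetic difference is that you state the inequality as $t^\ast+\Delta \le t_{AB}+5\Delta \le t_{BA}+6\Delta$ while the paper writes it as $t_3 \ge t_2 + \Delta$; these are the same computation.
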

\begin{proof}
  Since we know the start time of mutation between two contracts does not stagger beyond $\Delta$. Denote the start time on two contracts as $t_{first}$ and $t_{second}$ respectively where $t_{first}\leq t_{second}$. Since Bob is conforming, $ t_{second}-t_{first}\leq \Delta$. We denote the timeout for Carol to release $C_1$ on $AB$ and $BA$ contracts as $t_1$,$t_2$ respectively, and the timeout for Bob to release $C_1$ as $t_3$,$t_4$. Without loss of generality, assume $t_1=t_{first}+4\Delta$ and $t_2=t_{second}+4\Delta$. $t_3=t_{first}+6\Delta$ and $t_4=t_{second}+6\Delta$.
  We see that the latest timeout for Carol to release $C_1$ is $t_2$.  
  The earliest timeout for Bob to send $C_1$ satisfies $t_3\geq t_2+\Delta$ since we have $t_3=t_{first}+6\Delta\geq t_{second}+5\Delta=t_2+\Delta$. 
  That means,  if Carol releases $C_1$, then it is eventually sent to both contracts and the state is $Ready2Claim(H(C_2))$. 
  If Carol does not release $C_1$, then both contracts are reverted to $Ready2Claim(H(A_1))$ after timeout $t_3$ and $t_4$, respectively.
\end{proof}

\sloppy The atomic changes between AB and BA are a bit more complicated, since starting from $Ready2Claim(H(A_1))$, there are multiple possible state changes available on both contracts: $Claimed(H(A_1))$, $MutateLockContestable(H(C_1),H(C_2))$, and  $MutateLockNonContestable(H(C_1),H(C_2))$. The states $Claimed(H(A_1))$ and $MutateLockNonContestable(H(C_1),H(C_2))$ are not contestable. Assuming Bob is conforming and relay whatever he sees from one contract to another contract.

\begin{lemma}\label{lemma:2delta}
If a contract is transferred to $MutateLockContestable(H(C_1),H(C_2))$ at $t$, then within $t+2\Delta$, it learns the other contract's update and then agree on either reverting back if there is a conflict or agree on the same mutation if no conflict.
\end{lemma}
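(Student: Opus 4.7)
The plan is to split into cases on what contract $C'$ --- the ``other'' contract --- is doing at time $t$, and to use Bob's conforming behavior (relaying mutations and contesting inconsistencies) together with the synchrony bound $\Delta$. Let $C$ be the contract that enters $MutateLockContestable(H(C_1),H(C_2))$ at time $t$. Since blockchain writes become visible within $\Delta$ and Bob observes $C$'s state at $t$, the argument will bound by $t+2\Delta$ the time at which either a contest has been included in a block or both contracts already carry matching tentative mutations.

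First I would dispose of the case where $C'$ is already in $MutateLockContestable(H(C_1'),H(C_2'))$ at time $t$. If the two mutations coincide, there is nothing to do and both contracts agree immediately. Otherwise, Bob holds two of Alice's signatures on inconsistent messages, one published on each contract. Being conforming he invokes $\contestLeader()$ on both $C$ and $C'$, forwarding each the signature he observed on the other. These transactions are mined by $t+\Delta$, well inside the $2\Delta$ contest windows, so both contracts revert to $Ready2Claim(H(A_1))$.

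Next I would handle the case where $C'$ has not yet been mutated at $t$. By the already-established stagger bound, $C'$ reaches $MutateLockContestable(\cdot)$ no later than $t+\Delta$, either via Alice or via Bob, who relays using Alice's signature on $C$. If the message accompanying the mutation on $C'$ matches $(H(C_1),H(C_2))$, the two contracts agree and we are done. Otherwise Alice must have produced a second, inconsistent signature, and Bob again holds the two proofs needed to call $\contestLeader()$ on both contracts; the contest calls are included by $t+2\Delta$, still within the contest window of $C$ (closing at $t+2\Delta$) and within that of $C'$ (closing at most at $t+3\Delta$), so both revert.

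The main obstacle will be showing that the contest transactions fit inside the shorter contest window --- that of $C$, which closes exactly at $t+2\Delta$. The critical step is that Bob discovers any inconsistency at the latest when $C'$ has been mutated, i.e.\ by $t+\Delta$, and his reactive transactions are mined within another $\Delta$, hence by $t+2\Delta$. This timing chain, combined with unforgeability of Alice's signatures (so Bob cannot produce, and cannot be fooled by, spurious ``inconsistencies''), will be the essential ingredient; the remainder is straightforward case analysis.
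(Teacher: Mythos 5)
Your proposal is correct and takes essentially the same route as the paper's proof: use the fact that a conforming Bob relays whatever he sees to deduce that the other contract's state change (if any) falls in $[t-\Delta,\,t+\Delta]$, and then check that Bob's reactive $\contestLeader()$ or relayed mutation is included by $t+2\Delta$, which is exactly why the contest window is $2\Delta$ rather than $\Delta$. The only omission relative to the paper is that the ``conflicting update'' on the other contract need not be an inconsistent mutation signature --- it can also be Alice calling $\claim()$ with $A_1$, in which case Bob contests by forwarding the revealed preimage (Consistency Phase, step 3); the timing argument is identical, so this is a one-line addition to your case analysis rather than a real gap.
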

\begin{proof}
 If a contract, say AB, is transferred to
$MutateLockContestable(H(C_1),H(C_2))$ at time $t$, since the compliant Bob relays it, whatever happens before $t-\Delta$ to AB contract, that means, at time $t-\Delta$, the other chain is in $Ready2Claim(H(A_1))$. If there is any change happening on BA contract, it would happen between $t-\Delta$ to $t+\Delta$. Then, by $t+2\Delta$,  AB contract will learn what happens on BA contract between $t-\Delta$ to $t+\Delta$ by Bob. If AB contract does not receive any transaction from Bob, then no conflicting changes happened on BA from  $t-\Delta$ to $t+\Delta$. Bob can update BA contract to $MutateLockContestable(H(C_1),H(C_2))$ by relaying Alice's mutation to BA contract,  and  AB contract can be updated to $MutateLockNonContestable(H(C_1),H(C_2))$ after $t+2\Delta$. If there is any state change between  $t-\Delta$ to $t+\Delta$ on BA contract, then by $t+2\Delta$, AB contract would be informed, and its own state update is also sent to BA contract. Then the conflicting state update would revert back their state change to $Ready2Claim(H(A_1))$ when it receives the conflicting change.
\end{proof}

\begin{theorem}\label{theorem:goodbob}
Conforming Bob never ends up in \textit{UNDERWATER}.
\end{theorem}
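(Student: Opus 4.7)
The plan is to enumerate every way Bob's outgoing principal on $BA$ can leave his control and, in each case, exhibit the corresponding incoming principal he obtains. Since $BA.\refund()$ simply returns $\Asset_B$ to Bob (no loss), the only loss channel is a successful $BA.\claim()$, which requires revealing the preimage of the current $BA.\swapHashlock$. In the leader-transfer protocol, that hashlock is always either $H(A_1)$ or $H(C_2)$ (the latter only after a completed replacement), so the secret revealed must be $A_1$ or $C_2$. I would split on these two cases, and then separately handle the follower-transfer side.

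For the $A_1$ case, I would invoke Consistency Phase step (3): when conforming Bob observes $A_1$ on $BA$, he calls $\contestLeader()$ on $AB$ if a leader mutation is in flight (the contract explicitly admits a revealed swap secret as a valid contest witness), and then $\claim()$ on $AB$ with $A_1$. The $AB$ timeout $T+\Delta$, a full $\Delta$ past $BA$'s timeout $T$, gives him the standard atomic-swap window to include both transactions before $AB$ expires, so he collects $\Asset_A$.

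For the $C_2$ case, I would reduce to the consistency machinery already established. A successful $BA.\claim(C_2)$ requires $BA.\swapHashlock = H(C_2)$, which by the $\replaceLeader()$ code can only arise after $BA$ has passed through $MutateLockNonContestable(H(C_1),H(C_2))$ and Carol has revealed her preimage $C_1$ there. By Lemma \ref{lemma:2delta} together with the $\Delta$ stagger bound, $AB$ must likewise have reached $MutateLockNonContestable(H(C_1),H(C_2))$; Theorem \ref{theorem:after2.0} then forces the two contracts to end in the same branch, so $AB$ also evolves to $Ready2Claim(H(C_2))$ with Carol as sender and Bob as receiver. When $C_2$ surfaces on $BA$, Bob forwards it to $AB$ and claims $\Asset_A$ inside the $T+\Delta$ deadline.

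Finally, in the follower-transfer case, if Bob loses his sender-side position on $BA$ then David must have called $\replaceFollower()$ successfully on $BA$, thereby exposing $D_1$; the conforming Bob uses $D_1$ on $DB$ before its timeout $\startFollower+5\Delta$, which is exactly Theorem \ref{follower_transfer_atom_thm}. The main obstacle I anticipate is the timing audit in the mixed subcase where a leader mutation is mid-flight when Alice prematurely reveals $A_1$: I will have to verify carefully that the $\contestLeader()$ call both passes its \texttt{require} guards (in particular that $H(A_1) = \swapHashlock$ is an admissible contest condition, not shadowed by $A_1$ having been previously revealed) and still leaves enough slack inside the $T+\Delta$ window for Bob's subsequent $\claim()$ to be confirmed on $AB$, using the bound $\mutation.\startTime \leq T-7\Delta$ to absorb the worst-case $\Delta$ of contest propagation.
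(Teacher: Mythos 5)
Your proof is correct, but it is organized differently from the paper's. The paper argues \emph{forward} over the contract state machine: starting from $Ready2Claim(H(A_1))$ it enumerates the possible transitions of whichever contract moves first (to $Claimed(H(A_1))$, to $MutateLockContestable$, or directly to $MutateLockNonContestable$), uses Lemma \ref{lemma:2delta} and the $\Delta$-stagger bound to show conforming Bob keeps the two contracts in matching states, and then closes with Theorem \ref{theorem:after2.0} plus ``the guarantee of atomic swap.'' You instead argue \emph{backward} from the loss event: $\Asset_B$ can only leave Bob via a successful $BA.\claim()$, the revealed preimage is therefore $A_1$ or $C_2$, and in each case Bob mirrors the claim on $AB$ (contesting first in the premature-$A_1$ case, exploiting the $T+\Delta$ vs.\ $T$ timeout gap in both). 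The two arguments rest on exactly the same supporting results --- Consistency Phase step (3), Lemma \ref{lemma:2delta}, Theorem \ref{theorem:after2.0}, and the timeout arithmetic --- so neither is logically stronger, but your decomposition makes the failure modes more explicit (it names precisely which secret must have been revealed for Bob to be at risk), whereas the paper's forward analysis makes it easier to see that no state of the two-contract system was overlooked. Your flagged obligation --- checking that $\contestLeader()$ accepts a revealed swap preimage as a witness and that the contest plus claim fit inside $T+\Delta$ --- is real but discharges exactly as you anticipate: the contract's \texttt{require(sig.valid(leader) || H(secret) == swap\_hash\_lock)} guard admits $A_1$ as a contest witness, and the bound $\mutation.\startTime \leq T-7\Delta$ together with conforming Bob's $\Delta$-relay leaves ample slack. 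One small point to make explicit if you write this up fully: your premise that $BA.\swapHashlock$ is ``always $H(A_1)$ or $H(C_2)$'' itself depends on the consistency machinery (a completed replacement installs on $BA$ only a hashlock that Bob has verified matches the one on $AB$), so that observation should be cited as a consequence of Lemma \ref{lemma:2delta} rather than assumed.
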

\begin{proof}
  Starting from $Ready2Claim(H(A_1))$, 
\begin{enumerate}
    \item If a contract is transferred to $MutateLockContestable(H(C_1),H(C_2))$ at $t$, and the other contract is not updated to $Claimed(H(A_1))$, then by Lemma \ref{lemma:2delta}, we know that either both contracts eventually reach \\ $MutateLockNonContestable(H(C_1),H(C_2))$ or are reverted back to $Ready2Claim(H(A_1))$.
    
    \item If a contract BA is transferred to $Claimed(H(A_1))$ state at $t$, then, if the other contract makes temporary $MutateLockContestable(H(C_1),H(C_2)$, it will be revert back to $Ready2Claim(H(A_1))$ by $t+\Delta$, and then transfer to  $Claimed(H(A_1))$. Bob does not end up with \textit{UNDERWATER}.

    \item If a contract is transferred directly to $MutateLockNonContestable(H(C_1),H(C_2))$ at $t$, then the other contract must be $MutateLockContestable(H(C_1),H(C_2))$ at $t'\in [t-\Delta, t]$, and then after $t+2\Delta$ or sooner, the other contract will become $MutateLockNonContestable(H(C_1),H(C_2))$.
\end{enumerate}
When $MutateLockNonContestable(H(C_1),H(C_2))$ is reached on both contracts, by Theorem \ref{theorem:after2.0}, Bob will either be involved in the swap with Alice by Alice's secret $A_1$ or involved in the swap with Carol by Carol's secret $C_2$. By the guarantee of atomic swap, Bob will not end up with \textit{UNDERWATER}.

\end{proof}

The details for showing Protocol \ref{transfer_follower} satisfies \textit{No UNDERWATER} are omitted since they mirror the previous arguments shown for \ref{transfer_leader}.

\end{document}